\documentclass[runningheads]{llncs}

\usepackage{graphicx}
%

\usepackage{amssymb}
\usepackage{amsmath}
\usepackage{comment}
\usepackage[disable]{todonotes}
\usepackage[utf8]{inputenc}
\usepackage{shuffle}
\usepackage{multirow}
\usepackage{listings}
\usepackage{mathabx}
\usepackage{wrapfig}

\usepackage{thm-restate}

\usepackage{hyperref}

\usepackage{cite}

\lstset{%
  language=[LaTeX]TeX,
  backgroundcolor=\color{gray!10},
  basicstyle=\ttfamily,
  breaklines=true,
  columns=fullflexible
}

\usetikzlibrary{arrows,shapes,automata,positioning}

\newcommand{\perm}{\operatorname{perm}}

\newcommand{\lastsym}{\operatorname{last}}

\declaretheorem{Assumption}

\DeclareMathOperator{\lcm}{lcm}

 \let\llncssubparagraph\subparagraph
 \let\subparagraph\paragraph
 \usepackage[compact]{titlesec}
 \let\subparagraph\llncssubparagraph


 

\def\dotminus{\mathbin{\ooalign{\hss\raise1ex\hbox{.}\hss\cr
  \mathsurround=0pt$-$}}} 
 
\begin{document}\fontsize{10}{10}\rm
\title{State Complexity Bounds for the Commutative Closure of Group Languages}
%
%
\author{Stefan Hoffmann}
\authorrunning{S. Hoffmann}
%
\institute{Informatikwissenschaften, FB IV, 
  Universit\"at Trier,  Universitätsring 15, 54296~Trier, Germany, 
  \email{hoffmanns@informatik.uni-trier.de}}
\maketitle              
\begin{abstract}
 
  In this work we construct an automaton for the commutative
  closure of a  given regular group language. 
  The number of states of the resulting automaton is bounded by 
  the number of states of the original automaton, raised to the
  power of the alphabet size, times the product of the order of the letters, viewed as permutations of the state set.
  This gives the asymptotic state bound $O((n\exp(\sqrt{n\ln n}))^{|\Sigma|})$,
  if the original regular language is accepted by an automaton with $n$ states.
  Depending on the automaton in question, we label
  points of $\mathbb N_0^{|\Sigma|}$ by subsets of states and introduce unary automata
  which decompose the thus labelled grid. Based on these constructions,
  we give a general regularity condition, which is fulfilled for group languages.
  
\keywords{state complexity \and commutative closure \and group language \and permutation automaton} 
\end{abstract}

\vspace*{-0.5cm}

\section{Introduction}

The area of state complexity asks for sharp bounds on the size of resulting automata for 
regularity-preserving operations. 
This question goes back at least to work by Maslov~\cite{Mas70}, but, starting with the work~\cite{YuZhuangSalomaa1994}, 
has revived at the end of the last millennium.
The class of deterministic and complete automata is the most natural, 
or prototypical, class. But state complexity questions have also been explored for non-deterministic automata,
or other automata models, see for example the surveys~\cite{GaoMRY17,DBLP:books/ws/p/0001K10,DBLP:journals/iandc/HolzerK11}.
As the number of states of an accepting automaton 
could be interpreted as the memory required to describe the accepted language
and is directly related to the runtime of algorithms employing regular languages, obtaining
state complexity bounds is a natural question with applications in verification, natural language
processing or software engineering \cite{DBLP:reference/crc/2012fsbma,DBLP:journals/nle/Mohri96,GaoMRY17}.
So, nowadays, it is an active and important area of reasearch under the broader theme of
descriptional complexity of systems.
We refer again to the survey \cite{GaoMRY17} for an introduction and more information. 
It was shown in \cite{DBLP:journals/iandc/GomezGP13} that the commutative closure
is regularity preserving on regular group languages.
But the method of proof 
was algebraic and used Ramsey-type arguments. The 
general form of an accepting automaton was still open.
Here we give methods to obtain such an automaton, and derive state bounds for the commutative closure of 
regular group languages.
The state complexity of the commutative closure
on finite languages was investigated
in~\cite{DBLP:conf/dcfs/PalioudakisCGHK15,DBLP:journals/tcs/ChoGHKPS17}.


\section{Prerequisites}
\label{sec::prerequisites}

Let $\Sigma = \{a_1,\ldots, a_k\}$ be a finite set of symbols\footnote{If not otherwise stated we assume that our alphabet has
the form 
 $\Sigma = \{a_1, \ldots, a_k\}$, and $k$ denotes the number
 of symbols.},
 called an alphabet. The set $\Sigma^{\ast}$ denotes
the set of all finite sequences, i.e., of all words. The finite sequence of length zero,
or the empty word, is denoted by $\varepsilon$. For a given word $w$ we denote by $|w|$
its length, and, for $a \in \Sigma$, by $|w|_a$ the number of occurrences of the symbol $a$
in $w$. Subsets of $\Sigma^{\ast}$
are called languages. With $\mathbb N_0 = \{ 0,1,2,\ldots \}$
we denote the set of natural numbers, including zero. 
A finite deterministic and complete automaton will
be denoted by $\mathcal A = (\Sigma, S, \delta, s_0, F)$,
with $\delta : S \times \Sigma \to S$ the state transition function, $S$ a finite set of states, $s_0 \in S$
the start state and $F \subseteq S$ the set of final states. 
The properties of being deterministic and complete are implied by the definition of $\delta$
as a total function.
The transition function $\delta : S \times \Sigma \to S$
could be extended to a transition function on words $\delta^{\ast} : S \times \Sigma^{\ast} \to S$,
by setting $\delta^{\ast}(s, \varepsilon) := s$ and $\delta^{\ast}(s, wa) := \delta(\delta^{\ast}(s, w), a)$
for $s \in S$, $a \in \Sigma$ and $w \in \Sigma^{\ast}$. In the remainder we drop
the distinction between both functions and will also denote this extension by $\delta$.
The language accepted by some automaton $\mathcal A = (\Sigma, S, \delta, s_0, F)$ is
$
 L(\mathcal A) = \{ w \in \Sigma^{\ast} \mid \delta(s_0, w) \in F \}.
$
A language $L \subseteq \Sigma^{\ast}$ is called regular if $L = L(\mathcal A)$
for some finite automaton. 
The \emph{state complexity} of a regular language is the size of a minimal automaton accepting
this language.
An automaton is called a \emph{permutation automaton}
if the transformation of the states induced by a letter
is a permutation, i.e., a bijective function. A regular language
is called a \emph{group language} if it is accepted by some permutation automaton.
The map $\psi : \Sigma^{\ast} \to \mathbb N_0^k$
given by $\psi(w) = (|w|_{a_1}, \ldots, |w|_{a_k})$ is called the \emph{Parikh morphism}.\todo{Ab hier schmeißt Du phi uns psi durcheinander. Da psi Standard für Parikh ist, solltest Du für dein Zustandslabeln vlt. phi nehmen, oder vlt auch sigma

SH: Vereinheitlich. $\psi$ für Parikh map, $\sigma_{\mathcal A}$ für state-label function.}
For a given word $w \in \Sigma^{\ast}$ we define the \emph{commutative closure}
as $\perm(w) := \{ u \in \Sigma^{\ast} : \psi(u) = \psi(w) \}$.
For languages $L \subseteq \Sigma^{\ast}$ we set $\perm(L) := \bigcup_{w\in L} \perm(w)$.
A language is called \emph{commutative}
if $\perm(L) = L$, i.e., with every word each permutation of this word is also in the language.
Every function $f : X \to Y$ could be extended to subsets $S \subseteq X$ by setting $f(S) := \{ f(x) : x \in S \}$, we will do this frequently without special mentioning. For $Z \subseteq X$ we denote by $f_{|Z} : Z \to Y$ 
the function obtained 
by restriction of the arguments to elements of $Z$.
For a set $X$, we denote by $\mathcal P(X) = \{ Y : Y \subseteq X \}$
the power set of $X$. If $X,Y$ are sets, by $X \times Y$ we denote their cartesian product.
By $\pi_1 : X\times Y \to X$ and $\pi_2 : X \times Y \to Y$ we denote
the projection maps onto the first and second component, $\pi_1(x,y) = x$ and $\pi_2(x,y) = y$.
If $a,b\in \mathbb N_0$ with $b > 0$, we denote by $a \bmod b$ the unique number $0 \le r < b$
such that $a = bn + r$ for some $n \ge 0$.
For $n \in \mathbb N_0$ we set $[n] := \{ k \in \mathbb N_0 : 0 \le k < n \}$.
Let $M \subseteq \mathbb N_0$ be some \emph{finite} set. 
By $\max M$ we denote the maximal element in $M$
with respect to the usual order, and we set $\max \emptyset = 0$.
Also for finite $M \subseteq \mathbb N_0 \setminus\{0\}$, i.e., $M$ is finite without zero in it,
by $\lcm M$ we denote the
least common multiple of the numbers in $M$, and set $\lcm \emptyset = 0$.

\subsection{Unary Languages}
\label{sec:unary}

Let $\Sigma = \{a\}$ be a unary alphabet. In this section we collect some results about unary languages.
Suppose $L \subseteq \Sigma^{\ast}$ is regular
with an accepting complete deterministic automaton $\mathcal A = (\Sigma, S, \delta, s_0, F)$. Then, by considering
the sequence of states $\delta(s_0, a^1), \delta(s_0, a^2), \delta(s_0, a^3), \ldots$ we find numbers $i \ge 0, p > 0$ with $i$ and $p$ minimal such that $\delta(s_0, a^i) = \delta(s_0, a^{i+p})$.
We call these numbers the index $i$ and the period $p$ of the automaton $\mathcal A$.
Suppose $\mathcal A$ is initially connected, i.e., $\delta(s_0, \Sigma^*) = Q$.
Then $i + p = |S|$, the states from $\{ s_0, \delta(s_0, a), \ldots, \delta(s_0, a^{i-1}) \}$
constitute the \emph{tail}, and the states
from $\{ \delta(s_0, a^i), \delta(s_0, a^{i+1}), \ldots, \delta(s_0, a^{i+p-1} \}$
constitute the unique \emph{cycle} of the automaton.
If $\mathcal A$ is not initially connected, when we speak of the cycle or tail
of that automaton, we nevertheless mean the above sets, despite the automaton graph
might have more than one cycle, or more than one straight path.

\section{Results} 

\subsection{Intuition, Method of Proof and Main Results}



We have two main results, first a general automaton construction for the commutative closure of a regular
group language, and second a more general framework to derive this result, which entails a general
regularity condition for commutative closures. The first result, in asymptotic form.

  \begin{theorem}{(Asymptotic version)}
  \label{thm:state_compl_grp_lang}
      For a regular group language with state complexity $n$,
      the comm. closure is regular with
      state complexity in $O((n e^{\sqrt{n \ln n}})^{|\Sigma|})$. 
      \end{theorem}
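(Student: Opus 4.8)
The plan is to build, from a given $n$-state permutation automaton $\mathcal A = (\Sigma, S, \delta, s_0, F)$ recognizing the group language $L$, an explicit automaton for $\perm(L)$ and then count its states. Since $\perm(L)$ is commutative, it is determined by the subset $\psi(L) \subseteq \mathbb N_0^{|\Sigma|}$, so it suffices to recognize the set of Parikh vectors that arise from accepted words. The key observation is that, because each letter $a \in \Sigma$ acts on $S$ as a permutation $\delta(\cdot, a)$ of some finite order $\operatorname{ord}(a)$, the state reached after reading a word $w$ depends, coordinatewise, only on $|w|_a \bmod \operatorname{ord}(a)$ for each $a$ — provided we fix an order in which the letters are applied. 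So the first step is to fix a canonical linear order on $\Sigma$ and observe that for any Parikh vector $v = (v_1,\dots,v_k) \in \mathbb N_0^k$ there is a canonical word $a_1^{v_1}\cdots a_k^{v_k}$ with that Parikh image; a vector $v$ lies in $\psi(\perm(L))$ iff this canonical word is accepted, i.e.\ iff $v \in \psi(L)$ (since any word with Parikh image $v$ is in $\perm(L)$ exactly when some word with that image is in $L$, and we may as well test the canonical one).

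Next I would construct the recognizing automaton as a kind of ``iterated odometer''. Read the input letter by letter; maintain for each letter $a$ a counter modulo $\operatorname{ord}(a)$ recording $|w|_a \bmod \operatorname{ord}(a)$ so far — this needs $\prod_{a\in\Sigma}\operatorname{ord}(a)$ states. The subtlety is that rearranging an arbitrary input into the canonical order $a_1^{v_1}\cdots a_k^{v_k}$ changes intermediate states, so the automaton cannot simply simulate $\mathcal A$ on the fly; instead it must remember enough to reconstruct $\delta(s_0, a_1^{v_1}\cdots a_k^{v_k})$ from the final counter values. Here is where the unary-automaton machinery and grid-labelling sketched in the abstract come in: one decomposes $\mathbb N_0^k$ along each axis, and on each of the $n$-many ``layers'' indexed by a state one runs a unary sub-automaton in the letter $a_i$. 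Composing these across all $k$ letters gives a product construction whose state set is (a subset of) $S^{\Sigma}\times\prod_{a}\operatorname{ord}(a)$ — more precisely bounded by $n^{|\Sigma|}\cdot\prod_{a\in\Sigma}\operatorname{ord}(a)$, matching the non-asymptotic bound the paper states. Accepting states are those whose reconstructed target state lies in $F$.

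For the asymptotic form I would then just bound $\prod_{a\in\Sigma}\operatorname{ord}(a)$. Each $\operatorname{ord}(a)$ is the order of a permutation of an $n$-element set, hence at most Landau's function $g(n)$, and the classical estimate $g(n) = e^{(1+o(1))\sqrt{n\ln n}}$ gives $\operatorname{ord}(a) \le e^{\sqrt{n\ln n}}$ asymptotically (absorbing constants). Therefore $\prod_{a\in\Sigma}\operatorname{ord}(a) \le e^{|\Sigma|\sqrt{n\ln n}} = (e^{\sqrt{n\ln n}})^{|\Sigma|}$, and multiplying by the $n^{|\Sigma|}$ factor yields the claimed $O\!\left((n\,e^{\sqrt{n\ln n}})^{|\Sigma|}\right)$.

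The main obstacle is the middle step: correctly showing that a word's acceptance in $\perm(L)$ is detected by tracking only the residues $|w|_a \bmod \operatorname{ord}(a)$ together with a bounded amount of state information, i.e.\ that the reordering-to-canonical-form argument is sound. This requires the commutativity of $\perm(L)$ to be used exactly once, at the level of Parikh vectors, and then a careful product/pullback construction over the unary odometers so that the ``reconstructed state'' map is well-defined and consistent with $\delta$. The counting is then routine.
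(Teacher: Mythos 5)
Your argument breaks at the step ``we may as well test the canonical one.'' It is not true, for a permutation automaton $\mathcal A$, that a Parikh vector $v=(v_1,\dots,v_k)$ lies in $\psi(L(\mathcal A))$ if and only if the canonical word $a_1^{v_1}\cdots a_k^{v_k}$ is accepted. Group languages are not commutative in general, so the canonical word may be rejected while some other arrangement of the same letters is accepted. Concretely, take the $3$-state permutation automaton of Example~\ref{ex:perm_aut}, where $a_1$ is the $3$-cycle $(s_0\,s_1\,s_2)$, $a_2$ is the transposition $(s_0\,s_1)$, and $s_0$ is initial and final. For $v=(2,1)$ the canonical word $a_1a_1a_2$ leads to $s_2\notin F$, yet $a_2a_1a_1$ leads to $s_0\in F$; so $v\in\psi(L(\mathcal A))$ and hence $a_1^2a_2\in\perm(L(\mathcal A))$, but your odometer automaton, which only tracks residues and the state reached by the canonical word, would reject it.

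What a correct argument must do --- and what the paper does --- is track for each Parikh point $p$ the entire set $\sigma_{\mathcal A}(p)$ of states reachable by \emph{some} word with that Parikh image, and accept when this set meets $F$. Residues modulo $\operatorname{ord}(a)$ alone cannot reconstruct this set, because the label at $p$ depends not just on $p_j \bmod L_j$ but on how neighbouring labels along the other axes get merged in (Equation~\eqref{eqn:def_seq_grid_aut_transition}). The genuinely nontrivial content is then proving that the sequence of labels along each axis is eventually periodic with \emph{bounded} index and period: the period bound $L_j$ comes from the permutation orders, but the index bound $(|Q|-1)L_j$ rests on the observation that permutations never shrink the label's cardinality (Lemma~\ref{lem:state_labels_grow_grp_case}), so a label can strictly grow at most $|Q|-1$ times before it must cycle. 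Your proposal gestures at ``a bounded amount of state information'' and a product of unary automata, but never supplies this stabilisation argument, and the one explicit mechanism you do give (reorder to canonical form, track a single state) is unsound. The Landau-function estimate in your final paragraph is fine, but it is only the routine last step once the index/period bounds are in hand.
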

      
In Theorem~\ref{thm:sc_group_case}, a more quantitative version in terms of the constructions will be given.
But this result is more an application of a general scheme, which 
will be useful in future investigations
as well. So, let us spend some time in explaining the basic idea.
The constructions and definitions
that follow in the next sections 
are rather involved and technical, but they are, I hope to convince the reader, the worked out formalisation of quite a natural idea. Imagine you have an operation that identifies certain words (in our case, we identify words if they
are permutations of each other) and you apply this operation to a regular language. How could an (hitherto possibly infinite state) automaton for the result of this operation look? If you have two words $u$ and $v$, which are identified and drive the original automaton into two states, say $s$ and $t$, then in what state should an automaton end up for the resulting language after this identification? As it should not distinguish between both words, as they are identified, a possible state is $\{u,v\} \times \{s,t\}$, 
the set $\{u,v\}$ represents the read in word under identification, and $\{s,t\}$ represents
the possible states of the original automaton.
Applied to our situation, words are identified if they have the same Parikh image, i.e., the letter counts are equals. 
So, we start with \emph{labelling the grid $\mathbb N_0^k$ with the states that are reachable
by words whose Parikh image equals the point in question}.
Hence, if the original automaton has state set $Q$, we can think of as constructing an (infinite) automaton 
with state set $\mathbb N_0^k \times \mathcal P(Q)$. We will not formally construct this automaton, but it is implicit in the constructions we give. This automaton could be used to accept the commutative closure, where a word is accepted
if, after reading this word,  the second component, the \emph{state set label}, contains at least one final state, meaning for some word, which is equivalent to the read in word, we can reach a final state.
For a regular language, this construction could also be viewed as a generalized Parikh map, where
we not only label a point by the binary information if some word with that letter count is in the language or not,
but we have the more rich information what states are reachable by all permutations of a given word.
By only looking at the state labels containing a final state, we can recover the original Parikh map.
We will adopt this viewpoint, which is sufficient for the results, in the formal treatment to follow.
Also, note we have an intuitive correspondence to the power set construction, in the sense that
in this construction, the states, as sets, save all possibilities to end up after reading a word. 
Here, our state labels serve the same purpose as saving all possiblities.
So, intuitively and very roughly, the method
   could be thought of as both a refined Parikh map for regular languages
   and a power set construction for automata that incorporates the commutativity
   condition.
   
It turns out that the story does not end here, but that the ``generalized Parikh map'', or \emph{state (label) map},
as we will call it in the following, admits a lot of structure that allows to derive a regularity
criterion. 
First, let us state our regularity condition in intuitive terms, a more refined
statement is given as Theorem~\ref{thm:regularity_condition} later.

\begin{theorem}{(Intuitive form)}
\label{thm:regularity_condition_intuitive}
 Suppose the grid $\mathbb N_0^k$ is labelled by the states of a given automaton.
 If we have a universal bound $N \ge 0$ and a period $P > 0$
 such that, for $p = (p_1, \ldots, p_k) \in \mathbb N_0^K \setminus ([N] \times \ldots \times [N])$
 and $j \in \{1,\ldots, k\}$, the labels at $p$
 and $(p_1, \ldots, p_{j-1}, p_j - P, p_{j+1}, \ldots, p_k)$
 are equal, then the commutative closure of the 
 language described by the original automaton is regular and could be accepted
 by an automaton of size at most $N^k$.
\end{theorem}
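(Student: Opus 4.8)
The plan is to make precise the ``infinite automaton on the labelled grid'' alluded to above and then fold it onto a finite automaton with state set $[N]^k$. Write $\mathcal A = (\Sigma, S, \delta, s_0, F)$ for the automaton labelling the grid and let $\sigma_{\mathcal A} : \mathbb N_0^k \to \mathcal P(S)$, $\sigma_{\mathcal A}(p) = \{ \delta(s_0, w) : w \in \Sigma^{\ast},\ \psi(w) = p \}$, be its state-label map. Immediately from the definition, for every $w \in \Sigma^{\ast}$ one has $w \in \perm(L(\mathcal A))$ if and only if $\sigma_{\mathcal A}(\psi(w)) \cap F \ne \emptyset$; hence, setting $\Pi := \{ p \in \mathbb N_0^k : \sigma_{\mathcal A}(p) \cap F \ne \emptyset \}$, the language $\perm(L(\mathcal A))$ equals the commutative language $\{ w : \psi(w) \in \Pi \}$. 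Equivalently, $\perm(L(\mathcal A))$ is accepted by the deterministic, infinite-state automaton with state set $\mathbb N_0^k$, start state the origin, transitions $p \xrightarrow{a_j} p + e_j$ (where $e_j$ is the $j$-th unit vector), and final-state set $\Pi$, since reading $w$ in this automaton lands in $\psi(w)$. The whole task is then to collapse this automaton to a finite one with at most $N^k$ states, and this is where the periodicity hypothesis is used.

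To do so I would define a coordinatewise folding $\rho : \mathbb N_0^k \to [N]^k$: on each coordinate, $\rho$ is the identity on $\{0, \ldots, N-1\}$ and sends every $v \ge N$ to the unique $v' \in \{N-P, \ldots, N-1\}$ with $v' \equiv v \pmod P$ (this uses $N \ge P$, which the precise Theorem~\ref{thm:regularity_condition} will ensure; otherwise one enlarges the window accordingly). Let $\mathcal B$ be the automaton with state set $[N]^k$, start state the origin, transition function $r \xrightarrow{a_j} \rho(r + e_j)$, and $r$ declared final exactly when $\sigma_{\mathcal A}(r) \cap F \ne \emptyset$; it has $N^k$ states. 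Two claims then finish the proof. First, $\rho$ is compatible with transitions, i.e. $\rho(p + e_j) = \rho(\rho(p) + e_j)$ for all $p$ and $j$ -- a routine case analysis on the value of the $j$-th coordinate relative to $N-1$ and the window; this gives, by induction on $|w|$, that reading $w$ in $\mathcal B$ lands in $\rho(\psi(w))$. Second, $\rho$ preserves membership in $\Pi$; granting this, $\rho(\psi(w))$ is final iff $\sigma_{\mathcal A}(\rho(\psi(w))) \cap F \ne \emptyset$ iff $\psi(w) \in \Pi$ iff $w \in \perm(L(\mathcal A))$, so $L(\mathcal B) = \perm(L(\mathcal A))$, and since $\mathcal B$ has $N^k$ states the claimed bound follows.

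The heart of the argument is the second claim, and it is the only place the periodicity hypothesis enters. It suffices to show $\sigma_{\mathcal A}(p) = \sigma_{\mathcal A}(\rho(p))$ for every $p$, since then $\rho(p) = \rho(p')$ forces $\sigma_{\mathcal A}(p) = \sigma_{\mathcal A}(p')$ and a fortiori $p \in \Pi \Leftrightarrow p' \in \Pi$. Fix $p$ and reduce its large coordinates one at a time: if $p_j \ge N$ then $p \notin [N] \times \cdots \times [N]$ (its $j$-th coordinate alone witnesses this), so the hypothesis applies with this $j$ and yields $\sigma_{\mathcal A}(p) = \sigma_{\mathcal A}(p - P e_j)$; iterating, subtracting $P$ from the $j$-th coordinate as long as it stays $\ge N$, the label is unchanged at every step and the $j$-th coordinate lands exactly at $\rho(p)_j \in \{N-P, \ldots, N-1\}$. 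Performing this for each $j$ with $p_j \ge N$ gives $\sigma_{\mathcal A}(p) = \sigma_{\mathcal A}(\rho(p))$. I expect the main obstacle to be purely careful bookkeeping: keeping every intermediate point outside the box during the reduction (automatic, since the coordinate being reduced remains $\ge N$), choosing the window $\{N-P, \ldots, N-1\}$ so that compatibility of $\rho$ with transitions and preservation of $\Pi$ hold simultaneously, and verifying that the precise hypotheses of Theorem~\ref{thm:regularity_condition} -- in particular the interplay of $N$ and $P$ -- are exactly what makes both checks go through.
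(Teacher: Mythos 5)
Your proposal is correct, and it takes the same essential route as the paper's own proof of the precise version (Theorem~\ref{thm:regularity_condition}): fold the labelled grid $\mathbb N_0^k$ coordinatewise onto a finite box of counter automata and accept by whether the (folded) label meets $F$. Where you differ is in the correctness argument. The paper's proof constructs the product of unary counters $\mathcal A_j$ explicitly, declares the final states to be the image $E = \{\mu(s_0,u) : u\in L(\mathcal A)\}$, and then establishes $L(\mathcal C) = \perm(L(\mathcal A))$ by a two-sided word-reduction argument: given $w\in L(\mathcal C)$ and a witness $u\in L(\mathcal A)$ landing in the same state, it repeatedly strips $P_j$ copies of $a_j$ from each to obtain short words $w', u'$ with letter counts below $I_j+P_j$, for which the folding is injective, and then pushes the acceptance condition back up via the periodicity identity $\sigma_{\mathcal A}(p + P_j e_j) = \sigma_{\mathcal A}(p)$. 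You instead isolate the single structural lemma $\sigma_{\mathcal A}(p) = \sigma_{\mathcal A}(\rho(p))$ for your folding map $\rho$, which makes the equality $L(\mathcal B) = \perm(L(\mathcal A))$ essentially one line, and lets you define final states directly as $\{r : \sigma_{\mathcal A}(r)\cap F\ne\emptyset\}$ without detouring through words. The two definitions of final states coincide precisely because $\rho$ preserves labels, so this is a genuine streamlining, not a different theorem. One bookkeeping point you flag yourself and which is worth keeping explicit: the bound $N^k$ only makes sense when $N\ge P$ (otherwise the window $\{N-P,\dots,N-1\}$ spills below $0$), which is tacit in the intuitive statement but guaranteed in the precise version via $N = I_j+P_j \ge P_j$ per coordinate. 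Likewise, the hypothesis as literally written quantifies over all $j$ even when $p_j - P < 0$; your reading (apply it only where $p_j\ge N$) is the sensible one and agrees with how the precise theorem is actually used.
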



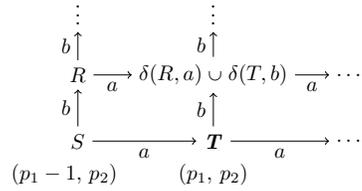
\begin{wrapfigure}[27]{r}{0.4\textwidth}
 \centering
  \scalebox{.8}{
    \begin{tikzpicture}[yscale=1.1, xscale=2.25] 

        \node at (1, -0.5)    {($p_1$, $p_2$)};
        \node at (-0.1, -0.5) {($p_1-1$, $p_2$)};
      
        \node at (0,0) (0p0) {$S$};
        \node at (1,0) (1p0) {\textbf{\emph{T}}};
        \node at (2,0) (2p0) {$\ldots$};
         
        \node at (0,1) (0p1) {$R$};
        \node at (0,2) (0p2) {$\vdots$};

        \node at (1,1) (1p1) {$\delta(R,a) \cup \delta(T, b)$};
        \node at (2,1) (2p1) {$\ldots$};  
        \node at (1,2) (1p2) {$\vdots$};  
        
       \path[->] (0p0) edge [below] node {$a$} (1p0)
                 (1p0) edge [below] node {$a$} (2p0);
                 
       \path[->] (0p0)  edge [left] node {$b$} (0p1)
                 (0p1)  edge [left] node {$b$} (0p2);
                 
       \path[->] (0p1) edge [below] node {$a$} (1p1);
       \path[->] (1p0) edge [left]  node {$b$} (1p1);
       \path[->] (1p1) edge [left]  node {$b$} (1p2);
       \path[->] (1p1) edge [below]  node {$a$} (2p1);
    \end{tikzpicture}}
  \caption{\footnotesize In illustration of how state labels are updated if new input symbols are read.
   We are at $(p_1, p_2)$ with state label $T$ and read the input $b$. So, we 
   will end up at $(p_1, p_2+1)$. Then
   the state label, for an automaton $\mathcal A = (\Sigma, Q, \delta, s_0, F)$,
   at $(p_1, p_2+1)$ is made up out of the state label $T$, but also out of the neighboring
   state label $R$. Imagine as ``going in both'' ways, i.e., the path $ab$ and $ba$,
   to compute the new state label. Please see the text for more explanation.}
    \label{fig:state_label_update}
\end{wrapfigure}

The basic mechanism behind this theorem is a decomposition of the state labels into unary automata.
I must confess, the form of these unary automata, formally stated in Definition~\ref{def:sequ_grid_decomp_aut}, is rather involved and, to be honest, took me quite some time to come up with.
The idea is to implement in these automata how the state labels are influenced by neighboring state labels. Imagine
we are at a certain point, then we receive an input letter and go to the next point that
corresponds to this additional input letter. What should the new state label look like?
First, we should carry with us the state label from the previous point, but updated with
the input letter. But, by the nature of the commutation relation, this input letter could
also be read at a previous point up to permutational identification of words.
So, we also go back to our previous point and investigate all the state labels of its neighboring
points from which we could reach this point. It turns out that it is enough to look at those points from
which we could reach the current point in one step. Then, we took their state labels, update them
for the input letter by going into the direction of this letter, but, after this, also
take the state label thus obtained back to our target state label in correspondence
with the letter from which we got from the neighboring state to the starting state.
We go around ``both commuting'' letters, please see Figure~\ref{fig:state_label_update}
for the case of $\Sigma = \{a,b\}$. It will be shown that this
operational scheme could be implemented into unary automata, which in this sense decompose 
the state labelling.
So, let us take the journey and see how these ideas are actually implemented!

 \subsubsection{Outline: } 
 
 In Section \ref{sec:reg_cond}, we first give a labelling of the grid $\mathbb N_0^k$
   by states of a given automaton. This labelling is
   in some sense an abstract description of the commutative closure, which is more precisely
   stated in Corollary \ref{cor:state_label_described_com_closure}.
   We then construct unary automata for each letter. Very roughly, and intuitively, they read
   in letters parallel to the direction of this letter in $\mathbb N_0^k$, given by the Parikh map.
   We have one such automaton for each point on the hyperplane orthogonal to this direction.
   These unary automata are then used to decribe the mentioned state labelling.
   In this sense, the state labelling is decomposed into these automata. This is made
   more precise in Proposition \ref{prop:grid_aut_decomp}.
   If all the automata in this decomposition, for each letter, only have a bounded number 
   of states, then the commutative closure is a regular language. 
   By using the indices and periods, we give a state bound for the resulting
   automaton in Theorem \ref{thm:regularity_condition}.
   In Section \ref{sec:grp_case}, these results are applied to the case that the given automaton is
   a permutation automaton. It turns out, stated in Proposition \ref{prop:bound_periods_Ap_group}
   and Proposition \ref{prop:index_specific_set}, that the index and the period
   are always bounded, for a bound dependent on the input automaton, which is also stated
   in these Propositions. Intuitively, the main observations why this works is that
   1) for permutations, the state labels cannot decrease as the unary automata read in symbols, and
   2) we know when the state labels must become periodic.
   Finally, applying our general result, then gives that
   the commutative closure is regular, and also yields a state complexity bound.


    \subsection{A Regularity Condition by Decomposing into Unary Automata}
    \label{sec:reg_cond}
    
     First, we introduce the state label function.
    
  \begin{definition}{(state label function)} \label{def:state-label-function}
   Let $\Sigma = \{a_1, \ldots, a_k\}$ be the alphabet. Suppose $\mathcal A = (\Sigma, Q, \delta, s_0, F)$
   is a finite automaton. The state label function, associated
   to the automaton, is the function $\sigma_{\mathcal A} : \mathbb N_0^{|\Sigma|} \to \mathcal P(Q)$
   given by 
   $$
    \sigma_{\mathcal A}(p) = \{ \delta(s_0, u) : \psi(u) = p) \}.
   $$
  \end{definition}
  
  The value of the function $\sigma_{\mathcal A}$, for some fixed automaton
  $\mathcal A = (\Sigma, Q, \delta, s_0, F)$, will also be called
  the \emph{state (set) label} for that point, or the \emph{state set} corresponding to that point.
  
  \begin{example}\label{ex:grid_aut} \footnotesize
 Consider the minimal automaton for the language\footnote{Here the minimal automaton
 has the property that no word induces a non-trivial permutation on some subset of states. Languages
 which admit such automata are called aperiodic in the literature. In some sense these are contrary 
 to group languages, the class considered in this paper.} $(a_1 a_2)^*$. The commutative
 closure of this language is not regular, as it is precisely the language of words with an equal number
 of both symbols.
 \begin{figure}[ht]
\begin{minipage}{0.4\textwidth}
\scalebox{.85}{
\begin{tikzpicture}[>=latex',shorten >=1pt,node distance=2cm,on grid,auto]
 \node[state, initial, accepting] (1) {$s_0$};
 \node[state]                     (3) [above right of=1] {$s_2$};
 \node[state]                     (2) [below right of=3] {$s_1$};
 \path[->] (1) edge [bend left=20] node {$a_1$} (2);
 \path[->] (2) edge [bend left=20] node {$a_2$} (1);
 \path[->] (1) edge [left] node {$a_2$} (3); 
 \path[->] (2) edge [right] node {$a_1$} (3);
 \path[->] (3) edge [loop above] node {$a_1, a_2$} (3);
  
 \node at (1.2,-1.5) {$Q = \{s_0, s_1, s_2\}$};
\end{tikzpicture}}
\end{minipage}%
\begin{minipage}{0.55\textwidth}
\scalebox{.8}{
    \begin{tikzpicture}[yscale=1.1, xscale=1.45] 
        \node at (-0.1, -0.5) {(0, 0)};
        
        \node at (0,0) (0p0) {$\{s_0\}$};
        \node at (1,0) (1p0) {$\{s_1\}$};
        \node at (2,0) (2p0) {$\{s_2\}$};
        \node at (3,0) (3p0) {$\{s_2\}$};
        \node at (4,0) (4p0) {$\{s_2\}$};
        \node at (5,0) (5p0) {$\ldots$};
         
        \node at (0,1) (0p1) {$\{s_2\}$};
        \node at (0,2) (0p2) {$\{s_2\}$};
        \node at (0,3) (0p3) {$\{s_2\}$};
        \node at (0,4) (0p4) {$\{s_2\}$};
        \node at (0,5) (0p5) {$\vdots$};
        \node at (1,5) (1p5) {$\vdots$};
        \node at (2,5) (2p5) {$\vdots$};
        \node at (3,5) (3p5) {$\vdots$};
        \node at (4,5) (4p5) {$\vdots$};

        \node at (1,1) (1p1) {$\{s_0,s_2\}$};
        \node at (2,1) (2p1) {$\{s_1,s_2\}$};
        \node at (3,1) (3p1) {$\{s_2\}$};
        \node at (4,1) (4p1) {$\{s_2\}$};
        \node at (5,1) (5p1) {$\ldots$};  
        
        \node at (1,2) (1p2) {$\{s_2\}$};
        \node at (2,2) (2p2) {$\{s_0,s_2\}$};
        \node at (3,2) (3p2) {$\{s_1,s_2\}$};
        \node at (4,2) (4p2) {$\{s_2\}$};
        \node at (5,2) (5p2) {$\ldots$};  
        
        \node at (1,3) (1p3) {$\{s_2\}$};
        \node at (2,3) (2p3) {$\{s_2\}$};
        \node at (3,3) (3p3) {$\{s_0,s_2\}$};
        \node at (4,3) (4p3) {$\{s_1,s_2\}$};
        \node at (5,3) (5p3) {$\ldots$}; 
        
        \node at (1,4) (1p4) {$\{s_2\}$};
        \node at (2,4) (2p4) {$\{s_2\}$};
        \node at (3,4) (3p4) {$\{s_2\}$};
        \node at (4,4) (4p4) {$\{s_0,s_2\}$};
        \node at (5,4) (5p4) {$\ldots$};

       \path[->] (0p0) edge [below] node {$a_1$} (1p0)
                 (1p0) edge [below] node {$a_1$} (2p0)
                 (2p0) edge [below] node {$a_1$} (3p0)
                 (3p0) edge [below] node {$a_1$} (4p0)
                 (4p0) edge [below] node {$a_1$} (5p0);
                 
       \path[->] (0p0)  edge [left] node {$a_2$} (0p1)
                 (0p1)  edge [left] node {$a_2$} (0p2)
                 (0p2)  edge [left] node {$a_2$} (0p3)
                 (0p3)  edge [left] node {$a_2$} (0p4)
                 (0p4)  edge [left] node {$a_2$} (0p5);
                 
       \path[->] (0p1) edge [below] node {$a_1$} (1p1)
                 (1p1) edge [below] node {$a_1$} (2p1)
                 (2p1) edge [below] node {$a_1$} (3p1)
                 (3p1) edge [below] node {$a_1$} (4p1)
                 (4p1) edge [below] node {$a_1$} (5p1);
                 
       \path[->] (0p2) edge [below] node {$a_1$} (1p2)
                 (1p2) edge [below] node {$a_1$} (2p2)
                 (2p2) edge [below] node {$a_1$} (3p2)
                 (3p2) edge [below] node {$a_1$} (4p2)
                 (4p2) edge [below] node {$a_1$} (5p2);
                 
       \path[->] (0p3) edge [below] node {$a_1$} (1p3)
                 (1p3) edge [below] node {$a_1$} (2p3)
                 (2p3) edge [below] node {$a_1$} (3p3)
                 (3p3) edge [below] node {$a_1$} (4p3)
                 (4p3) edge [below] node {$a_1$} (5p3);
                 
       \path[->] (0p4) edge [below] node {$a_1$} (1p4)
                 (1p4) edge [below] node {$a_1$} (2p4)
                 (2p4) edge [below] node {$a_1$} (3p4)
                 (3p4) edge [below] node {$a_1$} (4p4)
                 (4p4) edge [below] node {$a_1$} (5p4);
                 
       \path[->] (1p0)  edge [left] node {$a_2$} (1p1)
                 (1p1)  edge [left] node {$a_2$} (1p2)
                 (1p2)  edge [left] node {$a_2$} (1p3)
                 (1p3)  edge [left] node {$a_2$} (1p4)
                 (1p4)  edge [left] node {$a_2$} (1p5);
                 
       \path[->] (2p0)  edge [left] node {$a_2$} (2p1)
                 (2p1)  edge [left] node {$a_2$} (2p2)
                 (2p2)  edge [left] node {$a_2$} (2p3)
                 (2p3)  edge [left] node {$a_2$} (2p4)
                 (2p4)  edge [left] node {$a_2$} (2p5);
                 
       \path[->] (3p0)  edge [left] node {$a_2$} (3p1)
                 (3p1)  edge [left] node {$a_2$} (3p2)
                 (3p2)  edge [left] node {$a_2$} (3p3)
                 (3p3)  edge [left] node {$a_2$} (3p4)
                 (3p4)  edge [left] node {$a_2$} (3p5);
                 
       \path[->] (4p0)  edge [left] node {$a_2$} (4p1)
                 (4p1)  edge [left] node {$a_2$} (4p2)
                 (4p2)  edge [left] node {$a_2$} (4p3)
                 (4p3)  edge [left] node {$a_2$} (4p4)
                 (4p4)  edge [left] node {$a_2$} (4p5);
    \end{tikzpicture}}
\end{minipage}
   \caption{\footnotesize The minimal automaton of $(a_1 a_2)^*$ and a resulting state
   labelling in $\mathbb N_0^k$. Compare this to Example \ref{ex:perm_aut},
   where the labelling is given by a permutation automaton. The final state
   set is marked by a double circle. See Example \ref{ex:grid_aut} for an explanation.}
    \label{fig::example}
\end{figure}
\end{example}

  The image of the Parikh morphism could be described by
  the state label function. In this sense, for a fixed regular language, it is a more finer notion
  of the Parikh image. 
  
  \begin{restatable}[]{proposition}{connparikh}\label{prop:conn_parikh}
 {(Connection with Parikh morphism)}
 Assume $\Sigma = \{a_1, \ldots, a_k\}$. Let $\psi : \Sigma^* \to \mathbb N_0^k$ be
 the Parikh morphism. Suppose $\mathcal A = (\Sigma, Q, \delta, s_0, F)$
 is a finite automaton. Let $\sigma_{\mathcal A} : \mathbb N_0^k \to \mathcal P(Q)$
 be the state label function. Then
  $$
   \psi( L(\mathcal A) ) = \sigma_{\mathcal A}^{-1}(\{ S \subseteq Q \mid S \cap F \ne \emptyset\}). 
  $$
 \end{restatable}
 
  As $\perm(L) = \psi^{-1}(\psi(L))$ for every language $L \subseteq \Sigma^*$,
  the next is implied.
  
  \begin{corollary}\label{cor:state_label_described_com_closure}
   Let $\Sigma = \{a_1, \ldots, a_k\}$ be our alphabet. Denote by $\psi : \Sigma^* \to \mathbb N_0^k$
   the Parikh morphism. Suppose $\mathcal A = (\Sigma, Q, \delta, s_0, F)$
   is a finite automaton. Let $\sigma_{\mathcal A} : \mathbb N_0^k \to \mathcal P(Q)$
   be the state label function. Then
   $$
   \perm(L(\mathcal A)) = \psi^{-1}(\sigma_{\mathcal A}^{-1}(\{ S \subseteq Q \mid S \cap F \ne \emptyset\})).
   $$
  \end{corollary}
   
  Next, we introduce a notion for the hyperplanes that we will use in Definition~\ref{def:sequ_grid_decomp_aut}.

    \begin{definition}{(hyperplane aligned with letter)}\label{def:hyperplance}
     Let $\Sigma = \{a_1, \ldots, a_k\}$ and $j \in \{1,\ldots, k\}$.
     We set
     \begin{equation}\label{eqn:hyperplane}
      H_j = \{ (p_1, \ldots, p_k) \in \mathbb N_0^k \mid p_j = 0\}.
     \end{equation}
    \end{definition}
    
    Suppose $\Sigma = \{a_1, \ldots, a_k\}$ and $j \in \{1,\ldots, k\}$.
    We will decompose the state label map into unary automata.
    For each letter $a_j$ and point $p \in H_j$, we construct unary automata $\mathcal A_p^{(j)}$.
    They are meant to read in inputs in the direction $\psi(a_j)$, which is orthogonal
    to $H_j$. This will be stated more precisely in Proposition \ref{prop:grid_aut_decomp}.
    
    \begin{definition}{(unary automata along letter $a_j \in \Sigma$)}
    \label{def:sequ_grid_decomp_aut}
     Let $\Sigma = \{a_1, \ldots, a_k\}$. Suppose 
     $\mathcal A = (\Sigma, Q, \delta, s_0, F)$ is a finite automaton.
     Fix  $j \in \{1,\ldots, k\}$ and $p \in H_j$.
     We define a unary automaton $\mathcal A_p^{(j)} = (\{a_j\}, Q_p^{(j)}, \delta_p^{(j)}, s_p^{(0,j)}, F_p^{(j)})$. But suppose for points $q \in \mathbb N_0^k$
     with $p = q + \psi(b)$ for some $b \in \Sigma$
     the unary automata $\mathcal A_q^{(j)} = (\{a_j\}, Q_q^{(j)}, \delta_q^{(j)}, s_q^{(0,j)}, F_q^{(j)})$ are already defined.
     Set\footnote{Note that in the definition of $\mathcal P$,
     as $p \in H_j$, we have $b \ne a_j$ and $q \in H_j$. In general, points $q \in \mathbb N_0^k$
     with $p = q + \psi(b)$, for some $b \in \Sigma$, are  predecessor points in the grid $\mathbb N_0^k$.}
     $$
      \mathcal P =\{ \mathcal A_q^{(j)} \mid p = q + \psi(b) \mbox{ for some } b \in \Sigma \}.
     $$ 
     Let $I$ and $P$ be the maximal index and the least common multiple\footnote{Note $\max\emptyset = 0$
     and $\lcm\emptyset = 1$.} of the periods of the unary automata in $\mathcal P$.
     Then set
     \begin{align} 
        Q_p^{(j)} & = \mathcal P(Q) \times [I+P] \nonumber  \\
        s_p^{(0,j)} & = (\sigma_{\mathcal A}(p),0) \label{eqn:def_unary_aut_start_state} \\ 
        \label{eqn:def_unary_aut_transition}
        \delta_p^{(j)}( (S, i), a_j ) & =
       \left\{ \begin{array}{ll}
         (T, i + 1) & \mbox{ if } i+1 < I+P \\ 
         (T, I)     & \mbox{ if } i+1 = I+P.
       \end{array}\right.
    \end{align}
     where
    \begin{equation}\label{eqn:def_seq_grid_aut_transition}
      T = \delta(S, a_j)
           \cup \bigcup_{\substack{(q, b) \in \mathbb N_0^k \times\Sigma\\ p = q + \psi(b)}} \delta(\pi_1(\delta_q^{(j)}(s_q^{(0,j)}, a_j^{i+1})), b)
    \end{equation}
    and $F_p^{(j)} = \{ (S, i) \mid S \cap F \ne \emptyset \}$.
    For a state $(S, i) \in Q_p^{(j)}$, the set $S$ will be called
    the \emph{state (set) label}, or the \emph{state set associated with it}.
    \end{definition}

   See Example \ref{ex:seq_aut} for concrete constructions of the automata from Definition \ref{def:sequ_grid_decomp_aut}.
    
    \begin{example}\label{ex:seq_aut}\footnotesize
     In Figure \ref{ex:seq_aut} we list the reachable part from the start state of the unary automata $\mathcal A_{(0,0)}^{(2)}$,
     $\mathcal A_{(1,0)}^{(2)}$, $\mathcal A_{(2,0)}^{(2)}$ and $\mathcal A_{(3,0)}^{(2)}$
      corresponding to the automaton from Example \ref{ex:grid_aut} in order.
      Each automaton is constructed from  previous ones 
      according to Definition~\ref{def:sequ_grid_decomp_aut}.
      Note that, for example for $\mathcal A_{(1,0)}^{(2)}$, the state label of the second state is the union
     of the action of $a_2$ on $\{s_1\}$, i.e, the set $\delta(\{s_1\}, a_2)$,
     but also of $a_1$ on the state label $\{s_2\}$
     of the second 
     state of the previous automaton $\mathcal A_{(0,0)}^{(2)}$. 
     Note also that the second ''counter'' component is not enough to determine all states, 
     as at the end some automata have equal values in this entry (this is essentially how these automata
     grow in size).
 \begin{figure}[htb]
     \scalebox{.8}{ 
\begin{tikzpicture}[>=latex',shorten >=1pt,node distance=2.5cm,on grid,auto]
 \tikzset{elliptic state/.style={draw,rectangle}}
 \node[elliptic state] (1) {$\{ s_0 \}$};
 \node[elliptic state]                     (3) [right of=1] {$\{ s_2 \}$};
 \path[->] (1) edge node {$a_2$} (3);
 \path[->] (3) edge [loop right] node {$a_2$} (3);
 \node at (-1.3,0) {$\mathcal A_{(0,0)}^{(2)}$};
\end{tikzpicture}}

   \scalebox{.85}{
\begin{tikzpicture}[>=latex',shorten >=1pt,node distance=2.5cm,on grid,auto]
 \tikzstyle{state}=[draw,rectangle]
 \node[state] (1) {$(\{ s_1 \},0)$};
 \node[state]                     (2) [right of=1] {$(\{ s_0, s_2 \},1)$};
 \node[state]                     (3) [right of=2] {$(\{ s_2 \},1)$};
 \path[->] (1) edge node {$a_2$} (2);
 \path[->] (2) edge node {$a_2$} (3);
 \path[->] (3) edge [loop right] node {$a_2$} (3);
 \node at (-1.3,0) {$\mathcal A_{(1,0)}^{(2)}$};
\end{tikzpicture}}

   \scalebox{.85}{
\begin{tikzpicture}[>=latex',shorten >=1pt,node distance=2.5cm,on grid,auto]
 \tikzstyle{state}=[draw,rectangle]
 \node[state] (1) {$(\{ s_2 \},0)$};
 \node[state]                     (2) [right=of 1] {$(\{ s_1, s_2 \},1)$};
 \node[state]                     (4) [right=of 2] {$(\{ s_0, s_2 \},2)$};
 \node[state]                     (3) [right=of 4] {$(\{ s_2 \},2)$};
 \path[->] (1) edge node {$a_2$} (2);
 \path[->] (2) edge node {$a_2$} (4);
 \path[->] (4) edge node {$a_2$} (3);
 \path[->] (3) edge [loop right] node {$a_2$} (3);
 \node at (-1.3,0) {$\mathcal A_{(2,0)}^{(2)}$};
\end{tikzpicture}}

   \scalebox{.85}{
\begin{tikzpicture}[>=latex',shorten >=1pt,,node distance=2.5cm,on grid,auto]
 \tikzstyle{state}=[draw,rectangle]
 \node[state] (1) {$(\{ s_2 \},0)$};
 \node[state]                     (2) [right=of 1] {$(\{ s_2 \},1)$};
 \node[state]                     (3) [right=of 2] {$(\{ s_1, s_2 \},2)$};
 \node[state]                     (4) [right=of 3] {$(\{ s_0, s_2 \},3)$};
 \node[state]                     (5) [right=of 4] {$(\{ s_2 \},3)$};
 \path[->] (1) edge node {$a_2$} (2);
 \path[->] (2) edge node {$a_2$} (3);
 \path[->] (3) edge node {$a_2$} (4);
 \path[->] (4) edge node {$a_2$} (5);
 \path[->] (5) edge [loop right] node {$a_2$} (5);
 \node at (-1.3,0) {$\mathcal A_{(3,0)}^{(2)}$};
\end{tikzpicture}}
\caption{\footnotesize The reachable part of the unary automata $\mathcal A_{(0,0)}^{(2)}$,
     $\mathcal  A_{(1,0)}^{(2)}$, $\mathcal A_{(2,0)}^{(2)}$ and  $\mathcal A_{(3,0)}^{(2)}$ from Definition \ref{def:sequ_grid_decomp_aut},
     derived from the
     automaton from Example \ref{ex:grid_aut}. In Example \ref{ex:grid_aut}, these automata read in inputs in the up direction,
     but are drawn here horizontally to save space. See Example \ref{ex:seq_aut}
     for more explanation.}
    \label{fig::ex_sequ_aut}
\end{figure}
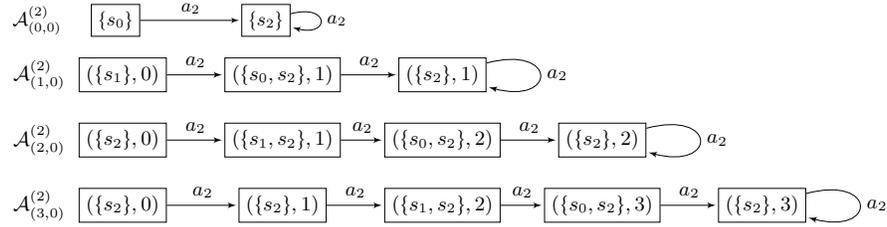
    \end{example}
    
    Suppose $p \in H_j$ and $j \in \{1,\ldots, k\}$.
    The next statement makes
    precise what we mean by decomposing the state label map along the 
    hyperplanes into the automata $\mathcal A_p^{(j)} = (\{a_j\}, Q_p^{(j)}, \delta_p^{(j)}, s_p^{(0,j)}, F_p^{(j)})$.
    Also, it justifies calling the first component of any state $(S, i) \in Q_p^{(j)}$
    also the state set label.
    
    \begin{restatable}[]{proposition}{gridautdecomp}{(state label map decomposition)} \label{prop:grid_aut_decomp}
     Suppose $\Sigma = \{a_1, \ldots, a_k\}$. 
     Let $1 \le j \le k$ and $p = (p_1, \ldots, p_k) \in \mathbb N_0^k$.
     Assume $\overline p \in H_j$ is the projection of $p$ onto $H_j$, i.e., $\overline p = (p_1, \ldots, p_{j-1}, 0, p_{j+1}, \ldots, p_k)$.
     Then
     $$
      \sigma_{\mathcal A}(p) = \pi_1(\delta_{\overline p}^{(j)}( s_{\overline p}^{(0,j)}, a_j^{p_j} ))
     $$
     for the automata $\mathcal A_{\overline p}^{(j)} = (\{a_j\}, Q_{\overline p}^{(j)}, \delta_{\overline p}^{(j)}, s_{\overline p}^{(0,j)}, F_{\overline p}^{(j)})$ from Definition \ref{def:sequ_grid_decomp_aut}.
    \end{restatable}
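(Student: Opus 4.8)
The plan is to prove, by a nested induction, the following statement, which contains the proposition as a special case: for every $j \in \{1,\dots,k\}$, every $\overline p \in H_j$ and every $m \ge 0$,
$$
 \pi_1\!\left(\delta_{\overline p}^{(j)}\!\left(s_{\overline p}^{(0,j)},a_j^{m}\right)\right)=\sigma_{\mathcal A}\!\left(\overline p+m\,\psi(a_j)\right).
$$
Indeed, taking $\overline p$ to be the projection of $p$ onto $H_j$ and $m = p_j$ gives $\overline p+p_j\psi(a_j)=p$. The outer induction runs on $|\overline p| = \sum_{i=1}^{k}(\overline p)_i$: every automaton appearing in the set $\mathcal P$ of Definition~\ref{def:sequ_grid_decomp_aut} used to build $\mathcal A_{\overline p}^{(j)}$ is indexed by a predecessor point $q$ with $|q| = |\overline p| - 1$, so the recursion there is well founded and I may assume the displayed identity already holds, for all $m' \ge 0$, for each such $\mathcal A_q^{(j)}$. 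For a fixed $\overline p$ I then argue by an inner induction on $m$; the base case $m=0$ is immediate from $s_{\overline p}^{(0,j)} = (\sigma_{\mathcal A}(\overline p),0)$.

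The inductive step rests on two auxiliary facts. First, a local decomposition of the state-label map: for any $p' \ne (0,\dots,0)$,
$$
 \sigma_{\mathcal A}(p') = \bigcup_{\substack{b \in \Sigma\\ p'-\psi(b) \in \mathbb N_0^k}} \delta\!\left(\sigma_{\mathcal A}(p'-\psi(b)),b\right),
$$
which follows by writing each (necessarily nonempty) word $u$ with $\psi(u)=p'$ as $u = u'b$ with $b$ its last letter (so $\psi(u') = p'-\psi(b)$), and conversely appending $b$ to each such $u'$. Second, and this is the heart of the matter, a synchronization property of the counter component. Writing $\delta_{\overline p}^{(j)}(s_{\overline p}^{(0,j)},a_j^{m}) = (S_m, c_m)$ and letting $I, P$ be the numbers from Definition~\ref{def:sequ_grid_decomp_aut}, inspection of the second component in \eqref{eqn:def_unary_aut_transition} shows that $c_m = m$ for $m < I+P$ while $c_m \in \{I,\dots,I+P-1\}$ and $c_m \equiv m \pmod P$ for $m \ge I$; in particular $c_m + 1 \equiv m+1 \pmod P$ always, and whenever $c_m+1 \ne m+1$ both $c_m+1$ and $m+1$ exceed $I$. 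Since every $\mathcal A_q^{(j)} \in \mathcal P$ has index at most $I$ and period dividing $P$ by construction, the words $a_j^{c_m+1}$ and $a_j^{m+1}$ drive $\mathcal A_q^{(j)}$ from $s_q^{(0,j)}$ into the same state — by the description of ultimately periodic state sequences recalled in Section~\ref{sec:unary} — and hence
$$
 \pi_1\!\left(\delta_q^{(j)}\!\left(s_q^{(0,j)},a_j^{c_m+1}\right)\right)=\pi_1\!\left(\delta_q^{(j)}\!\left(s_q^{(0,j)},a_j^{m+1}\right)\right).
$$

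With these at hand, the step $m \to m+1$ proceeds as follows. By \eqref{eqn:def_unary_aut_transition} and \eqref{eqn:def_seq_grid_aut_transition}, $\pi_1\!\left(\delta_{\overline p}^{(j)}(s_{\overline p}^{(0,j)},a_j^{m+1})\right) = T$ where
$$
 T = \delta(S_m,a_j) \cup \bigcup_{\substack{(q,b) \in \mathbb N_0^k \times \Sigma\\ \overline p = q+\psi(b)}} \delta\!\left(\pi_1\!\left(\delta_q^{(j)}\!\left(s_q^{(0,j)},a_j^{c_m+1}\right)\right),b\right).
$$
Using the synchronization property to replace $c_m+1$ by $m+1$, then the outer induction hypothesis to rewrite each term as $\delta(\sigma_{\mathcal A}(q+(m+1)\psi(a_j)),b) = \delta(\sigma_{\mathcal A}(p'-\psi(b)),b)$ with $p' := \overline p + (m+1)\psi(a_j)$, and the inner induction hypothesis to rewrite $S_m = \sigma_{\mathcal A}(\overline p + m\psi(a_j)) = \sigma_{\mathcal A}(p'-\psi(a_j))$, the formula for $T$ becomes exactly the right-hand side of the local decomposition at $p'$. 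Here one notes that $p' \ne (0,\dots,0)$ (its $j$-th coordinate is $m+1 \ge 1$), that $p'-\psi(a_j) \in \mathbb N_0^k$, and that for $b \ne a_j$ one has $\overline p - \psi(b) \in \mathbb N_0^k$ iff $p' - \psi(b) \in \mathbb N_0^k$, while $b = a_j$ never occurs among the $(q,b)$ of the union because $(\overline p)_j = 0$. Hence $T = \sigma_{\mathcal A}(p') = \sigma_{\mathcal A}(\overline p + (m+1)\psi(a_j))$, completing the inner induction and thereby the proof.

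I expect the counter-synchronization step to be the main obstacle: the component $[I+P]$ in Definition~\ref{def:sequ_grid_decomp_aut} exists precisely so that the ``folded'' exponent $c_m+1$ which \eqref{eqn:def_seq_grid_aut_transition} feeds into the predecessor automata produces the same state labels as the true exponent $m+1$ would. Making this rigorous requires the elementary but slightly delicate bookkeeping with indices and periods of unary automata from Section~\ref{sec:unary}, together with the facts that each $\mathcal A_q^{(j)}$ has index $\le I$ and period dividing $P$ and that $c_m+1$ and $m+1$ are either equal or both lie in the periodic part of the relevant sequences.
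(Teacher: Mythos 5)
Your proposal is correct and follows essentially the same route as the paper: the same inductive formula $\sigma_{\mathcal A}(p)=\bigcup_{p=q+\psi(b)}\delta(\sigma_{\mathcal A}(q),b)$, the same well-founded induction over predecessor automata (the paper packages the outer/inner split into one induction over $\mathbb N_0^k$, see the appendix remark on the induction scheme), and the same synchronization of the folded counter with the true exponent. Your uniform statement that $c_m+1\equiv m+1\pmod P$ with both exceeding $I$ whenever they differ is a clean reformulation of the paper's two-case split ($p_j\le I$ versus $p_j>I$) and is a fair tightening of the exposition, but not a different argument.
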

 
       With this observation, in Theorem \ref{thm:regularity_condition}, we derive a sufficient condition
       when the commutative image of some regular language is itself regular.
       It also gives us a general bound on the size of a minimal automaton, in case the commutative language
       is regular.
      
      \begin{restatable}[]{theorem}{regularitycondition} \label{thm:regularity_condition}
     Let $\mathcal A = (\Sigma, Q, \delta, s_0, F)$
     be a finite automaton. Suppose, for every $j \in \{1,\ldots, k\}$ and $p \in H_j$, with $H_j$ the hyperplane from Definition \ref{def:hyperplance}, the automata $\mathcal A_p^{(j)} = (\{a_j\}, Q_p^{(j)}, \delta_p^{(j)}, s_p^{(0,j)}, F_p^{(j)})$
     from Definition \ref{def:sequ_grid_decomp_aut}
     have a bounded number of states\footnote{Equivalently, the index and period is bounded, 
     which is equivalent 
     with just a finite number of distinct automata, up to (semi-automaton-)isomorphism.
     We call two automata (semi-automaton-)isomorphic if one automaton can be obtained from the other one by renaming states and alphabet symbols.}, i.e., $|Q_p^{(j)}| \le N$ for some $N \ge 0$ independent of $p$ and $j$.
     Then the commutative closure $\perm(L(\mathcal A))$     is regular
     and could be accepted by an automaton of size
     $$
      \prod_{j=1}^k (I_j + P_j),
     $$
     where $I_j$ denotes the largest index among
     the unary automata
     $
      \{ \mathcal A_p^{(j)} \mid p \in H_j \},
     $
     and $P_j$ the least common multiple
     of all the periods of these automata.
     In particular, by the relations of the index and period to the states from Section \ref{sec:unary},
     the automaton size is bounded by $N^k$.
     \todo{Muss ich das gleich verstehen? Intuition? 
     
     SH: Habe noch Erklärungen im folgenden hinzugefügt. Sowie eine Verbindung zur Ehrenfeucht/Haussler/Rozenberg-Arbeit.}
    \end{restatable}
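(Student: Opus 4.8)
The plan is to build the automaton for $\perm(L(\mathcal A))$ as a product of the unary automata from Definition~\ref{def:sequ_grid_decomp_aut}, one factor for each letter, and then use Proposition~\ref{prop:grid_aut_decomp} together with Corollary~\ref{cor:state_label_described_com_closure} to verify that this product accepts the right language. Concretely, first I would observe that, under the boundedness hypothesis, for each $j$ the family $\{\mathcal A_p^{(j)}\mid p\in H_j\}$ contains only finitely many distinct automata up to (semi-automaton-)isomorphism, since a unary automaton on at most $N$ states is determined, up to renaming, by its index and period, both of which are at most $N$; let $I_j$ be the maximal index and $P_j=\lcm$ of the periods occurring in this family. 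The key structural fact I would extract is that along any line parallel to direction $\psi(a_j)$ the state label map is \emph{eventually periodic with period dividing $P_j$ once the coordinate exceeds $I_j$}: this is exactly what it means for $\pi_1\circ\delta_{\overline p}^{(j)}(s_{\overline p}^{(0,j)},a_j^{\,\cdot})$ to factor through the tail-cycle structure common to all automata $\mathcal A_p^{(j)}$, and it follows because by construction every such automaton has its cycle length dividing $P_j$ and its tail at most $I_j$.

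Next I would define the candidate automaton $\mathcal B$ with state set $\prod_{j=1}^k [I_j+P_j]$, where the $j$-th coordinate tracks the position of the word, counted in letter $a_j$, inside the tail-cycle of length $I_j+P_j$ (i.e.\ the coordinate increments on reading $a_j$ and wraps from $I_j+P_j-1$ back to $I_j$, mirroring \eqref{eqn:def_unary_aut_transition}). The start state is $(0,\ldots,0)$, and a state $(i_1,\ldots,i_k)$ is final iff the state set label of the corresponding grid point contains a final state of $\mathcal A$. The crucial point is that this finality is \emph{well defined}: by Proposition~\ref{prop:grid_aut_decomp}, $\sigma_{\mathcal A}(p)=\pi_1(\delta_{\overline p}^{(j)}(s_{\overline p}^{(0,j)},a_j^{p_j}))$, and combining this over all $j$ one shows that $\sigma_{\mathcal A}(p)$ depends only on the tuple $(p_1,\ldots,p_k)$ reduced coordinatewise into $[I_j+P_j]$ by the tail-cycle folding — because in each coordinate, once past the index, shifting by the period leaves the label unchanged. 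Hence the finality predicate descends to $\prod_j[I_j+P_j]$, and by Corollary~\ref{cor:state_label_described_com_closure} the language of $\mathcal B$, pulled back along $\psi$, is exactly $\perm(L(\mathcal A))$; since $\mathcal B$ reads each letter only by updating the corresponding counter, $L(\mathcal B)=\psi^{-1}(\psi(L(\mathcal B)))$ automatically, so $L(\mathcal B)=\perm(L(\mathcal A))$.

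The size bound $\prod_{j=1}^k(I_j+P_j)$ is then immediate from the definition of $\mathcal B$, and $I_j+P_j\le N$ for each $j$ follows from the relation $\text{index}+\text{period}\le|S|$ recalled in Section~\ref{sec:unary} applied to the (initially connected part of the) automata in the family — more carefully, $I_j\le N$ and each period is $\le N$, and one argues that $I_j+P_j\le N$ by noting that all these unary automata embed into a single automaton on $N$ states whose cycle length is $P_j$ and whose tail is $I_j$; this yields the stated $N^k$ bound.

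The main obstacle I expect is the well-definedness of the finality predicate on the product $\prod_j[I_j+P_j]$, i.e.\ showing that $\sigma_{\mathcal A}(p)$ really is invariant under shifting any single coordinate $p_j$ by $P_j$ as soon as $p_j\ge I_j$, \emph{uniformly in the other coordinates}. Proposition~\ref{prop:grid_aut_decomp} gives periodicity along the $a_j$-direction for a \emph{fixed} base point $\overline p\in H_j$, but to get a single global period $P_j$ independent of $\overline p$ one must use that $P_j$ is the lcm of all periods in the family and $I_j$ the maximum of all indices, and then check that the recursive definition \eqref{eqn:def_seq_grid_aut_transition} of $\mathcal A_p^{(j)}$ does not reintroduce a larger index when $p$ itself ranges over the grid — here one leans on the footnote's remark that the automata in $\mathcal P$ relevant to point $p$ all lie in the hyperplane-indexed families for letters $b\ne a_j$, so an induction on the $\ell_1$-norm of $p$ controls indices and periods simultaneously. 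Handling this bookkeeping cleanly, rather than the product construction itself, is where the real work lies.
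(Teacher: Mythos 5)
Your construction is essentially the paper's: build a direct product of $k$ unary tail-cycle semi-automata with $I_j+P_j$ states in the $j$-th factor, define acceptance via the state label at the corresponding grid point, and justify this by the periodicity $\sigma_{\mathcal A}(p_1,\ldots,p_j+P_j,\ldots,p_k)=\sigma_{\mathcal A}(p_1,\ldots,p_k)$ for $p_j\ge I_j$, which is exactly what the paper establishes from Proposition~\ref{prop:grid_aut_decomp} as Equation~\eqref{eqn:state_label_fn_ult_periodic}. The paper packages the final states differently, taking $E=\{\mu(s_0,u):u\in L(\mathcal A)\}$ and then proving $L(\mathcal C)=\perm(L(\mathcal A))$ by reducing any $w\in L(\mathcal C)$ to a representative inside the bounded box $\prod_j[I_j+P_j)$; your ``well-definedness of the finality predicate'' is the dual formulation of that same step and is equivalent to it.

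One remark on your last paragraph: the concern that the recursive definition~\eqref{eqn:def_seq_grid_aut_transition} might ``reintroduce a larger index'' as $p$ ranges over the grid, requiring an $\ell_1$-induction, is a non-issue \emph{in this theorem}. Uniform boundedness of $|Q_p^{(j)}|$, hence of indices and periods, is given as the hypothesis, so $I_j$ (maximal index) and $P_j$ (lcm of periods) are finite by fiat, and the required line-by-line periodicity then follows directly from Proposition~\ref{prop:grid_aut_decomp}. The $\ell_1$-style induction you describe is what the paper uses later (Propositions~\ref{prop:bound_periods_Ap_group} and~\ref{prop:index_specific_set}) to actually \emph{prove} boundedness in the group-language case, not here where it is assumed.
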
 
    
     This gives us a general bound in case the commutative closure
     is regular. We will apply this to the case of group languages and permutation automata
     in Section \ref{sec:grp_case}. Theorem \ref{thm:regularity_condition}
     has a close relation to Theorem 6.5 from \cite{EhrenfeuchtHR83}, namely
     case (iii), as we could link the periodic languages introduced in this paper
     to unary automata, as was done in \cite{DBLP:conf/cai/Hoffmann19,Hoffmann20}.
     This linkage, in general, allows us to give more concrete bounds and constructions.
     For example, we can list all periodic languages inside the commutative closure,
     or we can even give concrete bounds on resulting automata.
     The proof in \cite{EhrenfeuchtHR83} used more abstract well-quasi order arguments that do not yield
     concrete automata, nor do they allow the arguments we employ in Section \ref{sec:grp_case}.

    \subsection{The Special Case of Group Languages}
    \label{sec:grp_case}
    
     
     Here we apply Theorem \ref{thm:regularity_condition}
     to derive state bounds for group languages.
     We need some basic observations about permutations, see for example \cite{cameron_1999}.
     Every permutation 
     could be written in terms of disjoint cycles. 
     For an element\footnote{In this context, the elements are also called points in the literature, but
     we will stick to the term elements or states.}of the permutation domain, 
     by the \emph{cycle length} of that element with respect to a given permutation,
     we mean the length of the cycle in which this element appears\footnote{For a given element $m \in [n]$
     and a permutation $\pi : [n] \to [n]$, this is the number $|\{ \pi^i(m) \mid i \ge 0 \}|$, in the literature also called
     the \emph{orbit length} of $m$ under the subgroup generated by $\pi$.}. 
     The order of a permutation is the smallest power such that the identity permutation
     results, which equals the least common multiple of all cycle lengths for all elements.
      Before stating our results, let us make some general assumptions
      and fix some notions, to make the statements more concise.
      
      \begin{Assumption}{(general assumptions for this section)}\label{ass:grp_case}
       Let $\Sigma = \{a_1, \ldots, a_k\}$. 
       Assume  a permutation automaton
       $\mathcal A = (\Sigma, Q, \delta, s_0, F)$ is given
       with $j \in \{1,\ldots, k\}$ and a point $p \in H_j$, where
        $H_j$ denotes the hyperplane from Definition \ref{def:hyperplance}.
       We denote by $\mathcal A_p^{(j)} = (\{a_j\}, Q_p^{(j)}, \delta_p^{(j)}, s_p^{(0,j)}, F_p^{(j)})$ the automata
       from Definition \ref{def:sequ_grid_decomp_aut}.
       By $L_j$ we will denote the order
       of the letter $a_j$, viewed as a permutation of the states $Q$, i.e., the least common multiple  of the cycle lengths of all states.
      \end{Assumption}
     
      A crucial ingredient to our arguments will be the following observation. 
      
      \begin{restatable}[]{lemma}{stablelabelsgrowgrpcase} \label{lem:state_labels_grow_grp_case}
       Choose the notation from Assumption \ref{ass:grp_case}.
       Then the state set labels of states from $\mathcal A_p^{(j)}$ will not decrease in cardinality
       as we read in symbols, and their cardinality will stay constant on the cycle of $\mathcal A_p^{(i)}$.
       More precisely, let
       $(S, x), (T,y) \in Q_p^{(i)}$ be any states.
       If $(T,y) = \delta_p^{(i)}((S,x), a_i^r)$ for some $r \ge 0$,
       then $|T| \ge |S|$.
       And if $(S,x)$ and $(T,y)$ are both on the cycle, i.e.,
       $(S,x) = \delta_p^{(i)}((S,x), a_i^r)$
       and $(T,y) = \delta_p^{(i)}(S,x), a_i^s)$ for some $r,s \ge 0$,
       then $|S| = |T|$.
      \end{restatable}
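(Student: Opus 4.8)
The plan is to prove the monotonicity of $|S|$ by induction on the construction of the family $\{\mathcal A_p^{(j)}\}$ from Definition~\ref{def:sequ_grid_decomp_aut}, the induction being along the partial order on $\mathbb N_0^k$ given by coordinate-wise domination (so that the predecessor points $q$ with $p = q + \psi(b)$ come before $p$). The base case is a point $p \in H_j$ all of whose predecessors of the form $q = p - \psi(b)$ lie outside $\mathbb N_0^k$; then the set $\mathcal P$ is empty, the transition simplifies to $T = \delta(S, a_j)$, and since $a_j$ acts as a permutation on $Q$, the extension to subsets is injective, so $|T| = |\delta(S,a_j)| = |S|$. In fact on such an automaton the state set labels have constant cardinality, which is more than we need.

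For the inductive step, fix $p \in H_j$ and a state $(S,i) \in Q_p^{(j)}$ with successor $(T, i')$ under $a_j$. By \eqref{eqn:def_seq_grid_aut_transition},
\[
 T = \delta(S, a_j) \cup \bigcup_{\substack{(q,b): p = q + \psi(b)}} \delta\bigl(\pi_1(\delta_q^{(j)}(s_q^{(0,j)}, a_j^{i+1})), b\bigr).
\]
The first term $\delta(S,a_j)$ already has cardinality exactly $|S|$, because $a_j$ is a permutation of $Q$; hence $|T| \ge |\delta(S,a_j)| = |S|$, which gives the first claim. Here I would want to be slightly careful that $(S,i)$ and $i+1$ really correspond: by Proposition~\ref{prop:grid_aut_decomp}, reachable states of $\mathcal A_p^{(j)}$ are exactly the $\delta_p^{(j)}(s_p^{(0,j)}, a_j^m)$, whose first component is $\sigma_{\mathcal A}(p + m\psi(a_j))$ and whose counter component is essentially $m$ capped into $[I+P]$; I would invoke this to identify $S = \sigma_{\mathcal A}(p + i\psi(a_j))$ and $\delta(S,a_j) = \sigma_{\mathcal A}(p + (i+1)\psi(a_j))$ as a subset of $T$. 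Iterating the single-step inequality along $a_j^r$ gives $|T| \ge |S|$ whenever $(T,y) = \delta_p^{(j)}((S,x), a_j^r)$, proving the first displayed assertion of the lemma.

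For the second assertion: if $(S,x)$ lies on the cycle of $\mathcal A_p^{(j)}$ and $(T,y) = \delta_p^{(j)}((S,x), a_j^s)$ also lies on the cycle, then there is some $t \ge 0$ with $(S,x) = \delta_p^{(j)}((T,y), a_j^t)$ (go once more around the cycle). Applying the monotonicity of the first part to both $(S,x) \to (T,y)$ and $(T,y) \to (S,x)$ yields $|S| \le |T|$ and $|T| \le |S|$, hence $|S| = |T|$. This reduces the whole lemma to the single-step inequality, which in turn rests only on $a_j$ being a bijection on $Q$ together with the structural fact $\delta(S,a_j) \subseteq T$.

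The main obstacle I anticipate is purely bookkeeping rather than conceptual: making the index $i$ in the pair $(S,i)$ interact correctly with the power $a_j^{i+1}$ appearing inside the union in \eqref{eqn:def_seq_grid_aut_transition}, particularly once the counter has wrapped from $I+P-1$ back to $I$ and no longer equals the true exponent. I would handle this by working on the \emph{reachable} part of the automata and using Proposition~\ref{prop:grid_aut_decomp} to translate everything back into statements about $\sigma_{\mathcal A}$ on the grid, where the relation ``$\delta(\sigma_{\mathcal A}(p'), a_j) \subseteq \sigma_{\mathcal A}(p' + \psi(a_j))$'' is manifest from Definition~\ref{def:state-label-function} (reading one more $a_j$) and the permutation property gives the reverse cardinality bound. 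Everything else is a direct unwinding of the definitions.
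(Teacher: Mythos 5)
Your proof is correct and follows essentially the same route as the paper: the first claim follows from the fact that $\delta(S,a_j)$ is always a subset of the successor label $T$ by Equation~\eqref{eqn:def_seq_grid_aut_transition}, plus $|\delta(S,a_j)| = |S|$ because $a_j$ is a permutation, and the cycle claim follows by running this inequality in both directions around the cycle. The induction over predecessor points and the bookkeeping worry about the counter $i$ versus the true exponent are superfluous, since the inclusion $\delta(S,a_j) \subseteq T$ holds for any state $(S,i)$ directly from the transition definition regardless of whether the counter has wrapped, and it alone carries the cardinality argument.
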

      
      To give state bounds on a resulting automaton, using Theorem \ref{thm:regularity_condition},
      we need bounds on the indices and periods of the unary automata
      from Definition \ref{def:sequ_grid_decomp_aut}.
      The following result gives us a criterion when we have reached the cycle
      in these automata, and will be used in deriving the mentioned bounds.
      
      \begin{restatable}[]{lemma}{periodApgroup} \label{lem:period_Ap_group}
       Choose the notation from Assumption \ref{ass:grp_case}.
       Set\footnote{For $p \in H_j$, the condition $p = q + \psi(b)$, for some $b \in \Sigma$,
      implies $q \in H_j$ and $b \ne a_j$.}
       $$
         \mathcal P =\{ \mathcal A_q^{(j)} \mid p = q + \psi(b) \mbox{ for some } b \in \Sigma \}.
       $$
       Denote by $I$ the maximal index and by $P$ the least common multiple of the periods
       of the unary automata in $\mathcal P$.
       Suppose $S \subseteq Q$ and let $L_S = \lcm\{ | \{ \delta(s, a_j^i) : i \ge 0 \} | : s \in S\}$ be the least common multiple of the cycle lengths of the elements 
       in $S$ with respect to the letter $a_j$, seen as a permutation of the states.
       Then for $m \ge I$ and the states $(S,x), (T,y) \in Q_p^{(j)}$ which fulfill
       $$
        (S, x) = \delta_p^{(j)}( s_p^{(0,j)}, a_j^{m} ) \quad\mbox{and}\quad
        (T, y) = \delta_p^{(j)}( s_p^{(0,j)}, a_j^{m+\lcm(P, L_S)} )
       $$
       we have that if $|S| = |T|$, then $S = T$ and\footnote{As we assume $m \ge I$, by Equation \eqref{eqn:def_unary_aut_transition}
       from Definition \ref{def:sequ_grid_decomp_aut},
       we have $x \ge I$.} $x = y$.
       This also implies that the period of $\mathcal A_p^{(j)}$ divides $\lcm(P, L_S)$.
      \end{restatable}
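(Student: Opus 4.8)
The plan is to track the set labels produced by $\mathcal A_p^{(j)}$ along the $a_j$-direction. Write $\sigma_p^{(j)}(n):=\pi_1(\delta_p^{(j)}(s_p^{(0,j)},a_j^{n}))$; by Proposition~\ref{prop:grid_aut_decomp} this equals $\sigma_{\mathcal A}(p+n\,\psi(a_j))$. First I would establish the recursion
$$
 \sigma_p^{(j)}(n+1)=\delta\bigl(\sigma_p^{(j)}(n),a_j\bigr)\;\cup\;U_p^{(j)}(n+1),\qquad
 U_p^{(j)}(n):=\bigcup_{\substack{(q,b)\in\mathbb N_0^k\times\Sigma\\ p=q+\psi(b)}}\delta\bigl(\sigma_q^{(j)}(n),b\bigr),
$$
which follows from Definition~\ref{def:sequ_grid_decomp_aut} together with the elementary identity $\sigma_{\mathcal A}(r)=\bigcup_{c:\,r-\psi(c)\in\mathbb N_0^k}\delta(\sigma_{\mathcal A}(r-\psi(c)),c)$, valid for $r\neq 0$. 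The only delicate point here is that \eqref{eqn:def_seq_grid_aut_transition} feeds the predecessor automata the \emph{truncated} counter $i+1$ rather than $n+1$; but every automaton in $\mathcal P$ has index $\le I$ and period dividing $P$, and for $n\ge I$ the truncated counter equals $I+((n-I)\bmod P)$, which agrees with $n$ modulo $P$ and is $\ge I$, so the invoked predecessor sets coincide with $\sigma_q^{(j)}(n+1)$. Iterating the recursion from position $m$ then gives, for all $t\ge 1$,
$$
 \sigma_p^{(j)}(m+t)=\delta\bigl(\sigma_p^{(j)}(m),a_j^{\,t}\bigr)\;\cup\;\bigcup_{r=1}^{t}\delta\bigl(U_p^{(j)}(m+r),a_j^{\,t-r}\bigr).
$$

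For the set-label equality, put $N=\lcm(P,L_S)$. Since $\mathcal A$ is a permutation automaton, every $s\in S$ lies on an $a_j$-cycle of length $|\{\delta(s,a_j^{i}):i\ge0\}|$, which divides $L_S$ and hence $N$; thus $\delta(s,a_j^{N})=s$ for each such $s$, i.e.\ $\delta(S,a_j^{N})=S$. Substituting $t=N$ in the iterated recursion yields $T=\sigma_p^{(j)}(m+N)=\delta(S,a_j^{N})\cup(\cdots)=S\cup(\cdots)$, so $S\subseteq T$; combined with the hypothesis $|S|=|T|$ and the finiteness of $Q$ this forces $S=T$. For the counter equality, \eqref{eqn:def_unary_aut_transition} shows that for $n\ge I$ the second component after reading $a_j^{n}$ equals $I+((n-I)\bmod P)$; since $m\ge I$ and $P\mid N$, this value is the same at $m$ and at $m+N$, so $x=y$.

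For the final assertion, I would choose $m\ge I$ large enough that $(S,x)=\delta_p^{(j)}(s_p^{(0,j)},a_j^{m})$ lies on the cycle of $\mathcal A_p^{(j)}$; then $(T,y)$ lies on the same cycle, so by Lemma~\ref{lem:state_labels_grow_grp_case} we have $|S|=|T|$ automatically, and the part just proved gives $\delta_p^{(j)}((S,x),a_j^{N})=(S,x)$, whence the period of $\mathcal A_p^{(j)}$ divides $N=\lcm(P,L_S)$. The main obstacle is the very first step: extracting the recursion cleanly from the rather involved Definition~\ref{def:sequ_grid_decomp_aut}, in particular justifying that the counter truncation in \eqref{eqn:def_seq_grid_aut_transition} does not change the predecessor labels that get invoked. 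Once that is secured, everything else reduces to the single permutation identity $\delta(S,a_j^{L_S})=S$ and to elementary arithmetic of the counter.
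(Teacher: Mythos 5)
Your proposal is correct and follows essentially the same route as the paper: the key facts in both are the inclusion $\delta(S,a_j^t)\subseteq\pi_1(\delta_p^{(j)}((S,x),a_j^t))$ read off from Equation \eqref{eqn:def_seq_grid_aut_transition}, the permutation identity $\delta(S,a_j^{L_S})=S$ (Lemma \ref{lem:lcm_cycle_subset}), the counter arithmetic of Equation \eqref{eqn:def_unary_aut_transition}, and the divisibility fact of Lemma \ref{lem:unary_period_divides}. The only difference is presentational: you develop a full recursion for $\sigma_p^{(j)}$ (and worry about the counter truncation when invoking predecessor automata), whereas the paper shortcuts this, noting that only the one-sided inclusion $\delta(S,a_j^t)\subseteq T$ is needed, which is immediate from the transition formula and requires no care about what the predecessor terms contribute.
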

 
       The next results gives us a bound for the periods of the automata 
       from Definition \ref{def:sequ_grid_decomp_aut}.
      
        \begin{restatable}[]{proposition}{boundperiodsApgroup} \label{prop:bound_periods_Ap_group}
         Choose the notation from Assumption \ref{ass:grp_case}. Let $p \in H_j$.
         Then the periods of all automata $\mathcal A_p^{(j)}$ divide $L_j$.
        \end{restatable}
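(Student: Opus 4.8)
The plan is to run a well-founded induction on the nonnegative integer $\|p\| := p_1 + \cdots + p_k$ (note that $p_j = 0$ since $p \in H_j$), with Lemma~\ref{lem:period_Ap_group} doing all the real work. Recall from that lemma that the period of $\mathcal A_p^{(j)}$ divides $\lcm(P, L_S)$, where $P$ is the least common multiple of the periods of the automata in
$\mathcal P = \{ \mathcal A_q^{(j)} \mid p = q + \psi(b) \text{ for some } b \in \Sigma \}$, $I$ is the maximal index among those automata, and $L_S$ is the least common multiple of the $a_j$-cycle lengths of the states in the state label $S = \pi_1(\delta_p^{(j)}(s_p^{(0,j)}, a_j^{I}))$ reached on the cycle of $\mathcal A_p^{(j)}$ (so that $|S|=|T|$ in the situation of Lemma~\ref{lem:period_Ap_group}, which holds by Lemma~\ref{lem:state_labels_grow_grp_case}). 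Hence it suffices to prove the two divisibilities $L_S \mid L_j$ and $P \mid L_j$, since then $\lcm(P, L_S) \mid L_j$.

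The divisibility $L_S \mid L_j$ is immediate and uniform in $S$: for every state $s \in Q$ the cycle length $|\{ \delta(s, a_j^i) : i \ge 0\}|$ divides $L_j$, because $L_j$ is by definition the least common multiple of the $a_j$-cycle lengths over all of $Q$; taking the least common multiple over $s \in S \subseteq Q$ shows $L_S \mid L_j$. (The state labels occurring here are in any case nonempty, as $\sigma_{\mathcal A}(p) \ne \emptyset$ for every $p$ and the labels only grow along the unary automata by Lemma~\ref{lem:state_labels_grow_grp_case}.)

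For $P \mid L_j$ we use the induction. Every $q$ with $p = q + \psi(b)$ for some $b \in \Sigma$ satisfies $b \ne a_j$ and $q_j = 0$, so $q \in H_j$, and moreover $\|q\| = \|p\| - 1 < \|p\|$. In the base case $\|p\| = 0$, i.e.\ $p = (0, \ldots, 0)$, there are no such predecessors, so $\mathcal P = \emptyset$ and $P = \lcm \emptyset = 1$, which divides $L_j$; by Lemma~\ref{lem:period_Ap_group} the period of $\mathcal A_{(0,\ldots,0)}^{(j)}$ then divides $\lcm(1, L_S) = L_S \mid L_j$. In the inductive step, the induction hypothesis applied to each predecessor $q \in H_j$ gives that the period of $\mathcal A_q^{(j)}$ divides $L_j$, hence so does their least common multiple $P$. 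Combining with $L_S \mid L_j$ yields $\lcm(P, L_S) \mid L_j$, and Lemma~\ref{lem:period_Ap_group} gives that the period of $\mathcal A_p^{(j)}$ divides $L_j$, closing the induction.

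Since $p \in H_j$ was arbitrary, this proves the claim. The substantive content is already packaged in Lemma~\ref{lem:period_Ap_group}, so there is no genuine obstacle left here; the only points that need a moment's care are (i) verifying that the predecessor relation $q \mapsto q + \psi(b) = p$ is well-founded and keeps us inside $H_j$, so that the induction on $\|p\|$ is legitimate and the base case $\mathcal P = \emptyset$ really occurs exactly at the origin, and (ii) noting that the bound $L_S \mid L_j$ is independent of which state label $S$ on the cycle (equivalently, which $m \ge I$) one feeds into Lemma~\ref{lem:period_Ap_group}.
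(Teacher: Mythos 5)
Your overall strategy matches the paper's: well-founded induction on $p \in H_j$ (the paper packages this as the induction scheme of Remark~\ref{rem:inductive_form}; your induction on $\|p\| = p_1 + \cdots + p_k$ is the same thing), with Lemma~\ref{lem:period_Ap_group} doing the heavy lifting, $L_S \mid L_j$ trivially for any nonempty $S \subseteq Q$, $P \mid L_j$ by the induction hypothesis, and the base case $\mathcal P = \emptyset$ handled by $\lcm\emptyset = 1$.

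There is, however, a small but genuine slip in how you invoke Lemma~\ref{lem:period_Ap_group}. You set $S = \pi_1(\delta_p^{(j)}(s_p^{(0,j)}, a_j^{I}))$ and assert that this state is ``reached on the cycle of $\mathcal A_p^{(j)}$,'' citing Lemma~\ref{lem:state_labels_grow_grp_case} to get $|S| = |T|$. But $I$ is the maximal index of the \emph{predecessor} automata, and the index of $\mathcal A_p^{(j)}$ itself can strictly exceed $I$ --- this is precisely what Proposition~\ref{prop:index_specific_set} is in the business of bounding, and one can see it happen in the paper's Example~\ref{ex:perm_aut}, where $\mathcal A_{(1,0)}^{(2)}$ has index $1$ while its only predecessor $\mathcal A_{(0,0)}^{(2)}$ has index $0$. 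So the state at depth $I$ can lie on the tail, in which case Lemma~\ref{lem:state_labels_grow_grp_case} only gives $|S| \le |T|$, the hypothesis $|S| = |T|$ of Lemma~\ref{lem:period_Ap_group} need not hold, and that lemma then yields nothing. Your remark (ii) that $L_S \mid L_j$ is uniform in $S$ is true but beside the point: the difficulty is not the value of $L_S$ but whether the lemma's conclusion (that the period divides $\lcm(P, L_S)$) is actually available.

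The fix is exactly what the paper does: take $(S,x)$ to be \emph{any} state on the cycle of $\mathcal A_p^{(j)}$. The index of $\mathcal A_p^{(j)}$ is at least $I$ (read off the counting component in Equation~\eqref{eqn:def_unary_aut_transition}), so $x \ge I$ automatically, and since $(S,x)$ and $(T,y) = \delta_p^{(j)}((S,x), a_j^{\lcm(P,L_S)})$ are then both on the cycle, Lemma~\ref{lem:state_labels_grow_grp_case} gives $|S| = |T|$ and Lemma~\ref{lem:period_Ap_group} applies. With this repair your proof is the paper's proof.
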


        The criterion for the cycle detection from Lemma \ref{lem:period_Ap_group}
        could be a little bit relaxed by the next result, which will be
        more useful for proving a bound on the index of the automata from Definition \ref{def:sequ_grid_decomp_aut}.
        Intuitively, it bounds the way in which the indices of the automata
        from Definition \ref{def:sequ_grid_decomp_aut} can grow.

        \begin{restatable}[]{corollary}{Lioncycle} \label{cor:Li_on_cycle}
         Choose the notation from Assumption \ref{ass:grp_case}.
         Set
         $$
          \mathcal P =\{ \mathcal A_q^{(j)} \mid p = q + \psi(b) \mbox{ for some } b \in \Sigma \}.
         $$
         Denote by $I$ the maximal index and by $P$ the least common multiple of the periods
         of the unary automata in $\mathcal P$.
         Then for states $(S,x), (T,y) \in Q_p^{(j)}$ with $x \ge I$ and
         $$
          (T,y) = \delta_p^{(j)}( (S,x), a_i^{L_j} )
         $$ 
         we have that $|T| = |S|$ implies $T = S$ and $x = y$.
        \end{restatable}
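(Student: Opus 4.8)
The plan is to derive this as a direct consequence of Lemma~\ref{lem:period_Ap_group}, using Proposition~\ref{prop:bound_periods_Ap_group} to control the quantities $P$ and $L_S$ appearing there. First I would invoke Proposition~\ref{prop:bound_periods_Ap_group} applied to the predecessor automata: since each $\mathcal A_q^{(j)} \in \mathcal P$ has $q \in H_j$, its period divides $L_j$; hence $P = \lcm$ of these periods also divides $L_j$. Next, for the set $S$ in the statement, the quantity $L_S = \lcm\{\,|\{\delta(s,a_j^i) : i \ge 0\}| : s \in S\,\}$ is a least common multiple of cycle lengths of the permutation induced by $a_j$ on $Q$, restricted to the elements of $S$; since $L_j$ is by definition the least common multiple of the cycle lengths of \emph{all} states in $Q$, we get $L_S \mid L_j$. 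Therefore $\lcm(P, L_S) \mid L_j$.

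Now I would compare the two hypotheses. In Corollary~\ref{cor:Li_on_cycle} we read $a_j^{L_j}$ starting from $(S,x)$ with $x \ge I$; in Lemma~\ref{lem:period_Ap_group} the step size is $\lcm(P, L_S)$, which divides $L_j$. Writing $L_j = \lcm(P,L_S)\cdot t$ for some positive integer $t$, reading $a_j^{L_j}$ is the same as applying the Lemma's step $t$ times. The key point is that Lemma~\ref{lem:state_labels_grow_grp_case} guarantees the state-set-label cardinalities are nondecreasing along the action of $a_j$; so if $|T| = |S|$ where $(T,y) = \delta_p^{(j)}((S,x), a_j^{L_j})$, then the cardinality is constant along the whole intermediate chain $(S,x) = (S_0,x_0), (S_1,x_1), \ldots, (S_t,x_t) = (T,y)$, where each step is by $a_j^{\lcm(P,L_S)}$. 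In particular $|S_0| = |S_1|$, and since $x_0 = x \ge I$ we are in position to apply Lemma~\ref{lem:period_Ap_group} (with $m := x_0$, noting $L_{S_0} = L_S \mid L_j$ so $\lcm(P, L_{S_0}) \mid \lcm(P,L_S)$, and adjusting the step accordingly — or more cleanly, re-run the Lemma's argument with step $\lcm(P,L_S)$ directly since that is a multiple of $\lcm(P,L_{S_0})$). This yields $S_1 = S_0$ and $x_1 = x_0$, i.e. $(S,x)$ is already on the cycle with period dividing $\lcm(P,L_S)$, hence dividing $L_j$; iterating (or simply noting the state is now fixed) gives $(T,y) = (S,x)$, so $T = S$ and $x = y$.

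The one technical wrinkle I expect to need care with is the mismatch between the step size $L_j$ in the conclusion and the step size $\lcm(P, L_S)$ in Lemma~\ref{lem:period_Ap_group}: one must argue that constancy of cardinality over the big step $L_j$ forces constancy over the divisor step $\lcm(P,L_S)$, which is exactly where the monotonicity of Lemma~\ref{lem:state_labels_grow_grp_case} is essential (a nondecreasing integer sequence with equal endpoints is constant). A second small point is that $L_S$ depends on $S$, but since $L_S \mid L_j$ uniformly, all the divisibility relations collapse cleanly into $L_j$, so no circularity arises. Apart from that, the argument is a routine combination of the three preceding results.
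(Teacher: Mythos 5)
Your proposal is correct and follows essentially the same route as the paper's own proof: show $\lcm(P,L_S)\mid L_j$ via Proposition~\ref{prop:bound_periods_Ap_group} and the definition of $L_j$, use the cardinality monotonicity of Lemma~\ref{lem:state_labels_grow_grp_case} to force constancy of $|\cdot|$ along the chain of $\lcm(P,L_S)$-size substeps inside the $L_j$-step, apply Lemma~\ref{lem:period_Ap_group} once to conclude the first substep is the identity, and iterate. The parenthetical worry about $L_{S_0}$ versus $L_S$ is a non-issue since $S_0 = S$, but otherwise the argument matches the paper's.
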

      
        Finally, we state
        a bound for the indices of the automata from Definition \ref{def:sequ_grid_decomp_aut}.\todo{unvollst. Satz SH: Umformuliert.}
      
       \begin{restatable}[]{proposition}{indexspecificset} \label{prop:index_specific_set}
         Choose the notation from Assumption \ref{ass:grp_case}.
         Then the index of any automaton 
         $\mathcal A_p^{(j)}$ is 
         bounded by $(|T|-1) \cdot L_j$, where $T$ is any state set label from a state on the cycle of $\mathcal A_p^{(j)}$.
       \end{restatable}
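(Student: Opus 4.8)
The plan is to track how the cardinality of the state set labels evolves along the unary automaton $\mathcal A_p^{(j)}$ and to show that it must stabilize quickly, after which Corollary~\ref{cor:Li_on_cycle} forces periodicity. Let $(S_0, 0), (S_1, 1), (S_2, 2), \ldots$ be the sequence of states along $\mathcal A_p^{(j)}$, i.e.\ $(S_m, m') = \delta_p^{(j)}(s_p^{(0,j)}, a_j^m)$ where $m'$ is the (clamped) second component. By Lemma~\ref{lem:state_labels_grow_grp_case}, the sequence $|S_0| \le |S_1| \le |S_2| \le \cdots$ is non-decreasing, and it is bounded above by $|T|$, the common cardinality of all state set labels on the cycle (which by the same lemma is constant there). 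So the cardinality can increase at most $|T| - |S_0| \le |T| - 1$ times.

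The key step is to bound how long the cardinality can stay at a given value before it must either increase or we must already be on the cycle. Suppose that for some $m \ge 0$ we have $|S_m| = |S_{m + L_j}|$. I would invoke Corollary~\ref{cor:Li_on_cycle}, but that requires $x \ge I$ where $I$ is the maximal index among the predecessor automata in $\mathcal P$; so I would first argue that once the cardinality has stabilized we are past that threshold, or, more carefully, re-run the cardinality-monotonicity argument to conclude that if $|S_m| = |S_{m+L_j}|$ then in fact $|S_m| = |S_{m+1}| = \cdots = |S_{m+L_j}|$ (monotone and equal at the endpoints), and that Corollary~\ref{cor:Li_on_cycle}/Lemma~\ref{lem:period_Ap_group} then give $S_m = S_{m+L_j}$ and equal counter components, i.e.\ $(S_m, m)$ lies on the cycle. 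Conversely, while the cardinality has not yet stabilized, each block of $L_j$ consecutive steps must witness a strict increase: if $|S_m| < |T|$ and $(S_m, m)$ is not yet on the cycle, then $|S_{m+L_j}| > |S_m|$ (otherwise the previous sentence would put us on the cycle already, and the cardinality would be $|T|$, contradiction). Hence after at most $(|T|-1)$ such blocks the cardinality reaches its final value and the automaton has entered its cycle.

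Putting these together: the index of $\mathcal A_p^{(j)}$ — the least $m$ with $(S_m, m)$ on the cycle — is at most the number of "increase blocks" times $L_j$, i.e.\ at most $(|T| - 1) \cdot L_j$. Since $|S_0| \ge 1$ (the label $\sigma_{\mathcal A}(p)$ is nonempty, being $\{\delta(s_0, u) : \psi(u) = p\}$ with at least one such $u$ when $p$ is reachable, and trivially when it is not the statement is vacuous or the bound still holds as the empty-reach case only shrinks things), the number of strict increases is at most $|T| - |S_0| \le |T| - 1$, which is exactly the claimed bound. I would also note that $|T|$ is well-defined and independent of which cycle state we pick, by the constancy clause of Lemma~\ref{lem:state_labels_grow_grp_case}.

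The main obstacle I anticipate is the bookkeeping around the counter component and the threshold $I$: Corollary~\ref{cor:Li_on_cycle} is stated only for $x \ge I$, so I need to be sure that the "cardinality has stabilized" regime actually lies beyond index $I$, or else argue the cardinality-stabilization point directly without the corollary in the initial segment. The cleanest route is probably to observe that the index $I$ of the predecessor automata is itself controlled (inductively, or via the structure of $\mathcal P$ where $q \in H_j$ and $b \ne a_j$), so that the cardinality increases occurring at indices $< I$ are already accounted for; alternatively, one restructures the argument so that every strict cardinality increase, wherever it occurs, consumes at most $L_j$ steps, making the $I$-threshold irrelevant to the final count. I expect a short lemma-internal induction or a direct appeal to Lemma~\ref{lem:period_Ap_group} with $L_S = L_j$ (since the cycle lengths of individual states divide $L_j$) to resolve this cleanly.
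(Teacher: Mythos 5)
Your overall plan matches the paper's: cardinality is monotone (Lemma~\ref{lem:state_labels_grow_grp_case}), it can increase at most $|T|-1$ times, and past the threshold every $L_j$-block without an increase signals the cycle (Corollary~\ref{cor:Li_on_cycle}). You also correctly put your finger on the real difficulty — the corollary only applies once the counter component is $\ge I$, where $I$ is the maximal index of the predecessor automata in $\mathcal P$, and you need to control the tail $[0, I)$.

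But you leave exactly that step unresolved, and neither of your two sketched escape routes is carried out; one of them in fact fails. Route (b), ``every strict cardinality increase, wherever it occurs, consumes at most $L_j$ steps, making the $I$-threshold irrelevant,'' does not work: in the tail $[0,I)$ the label $S_m$ can stay constant over $L_j$ steps (even $S_m = S_{m+L_j}$ since $\delta(S_m,a_j^{L_j}) = S_m$) while the counter keeps advancing, so $(S_m,m)$ is not on the cycle and no cardinality increase is forced. Route (a) is the right one, but it needs a concrete argument, which the paper supplies as a second inductive claim: let $T_0$ be the label at step $I$; for any predecessor $\mathcal A_q^{(j)}$ with $p = q+\psi(b)$, its cycle-label $S$ satisfies $\delta(S,b)\subseteq T_0$, hence $|S|\le |T_0|$ (as $b$ is a permutation), and by induction the index of $\mathcal A_q^{(j)}$ is $\le (|S|-1)L_j \le (|T_0|-1)L_j$, giving $I\le (|T_0|-1)L_j$. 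Only then does the bookkeeping close: index of $\mathcal A_p^{(j)}$ $\le I + (|T|-|T_0|)L_j \le (|T|-1)L_j$. Your proposal does not identify the quantity $|T_0|$, does not establish the inclusion $\delta(S,b)\subseteq T_0$, and does not set up the induction over predecessor automata (the scheme described in Remark~\ref{rem:inductive_form}), so the bound $I\le (|T_0|-1)L_j$ — which is exactly what makes the final arithmetic come out to $(|T|-1)L_j$ rather than $I + (|T|-1)L_j$ — is missing.
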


      Combining everything gives our state complexity bound.
      
      \begin{theorem} 
       \label{thm:sc_group_case}
       Choose the notation from Assumption \ref{ass:grp_case}.
       Then the commutative closure $\perm(L(\mathcal A))$
       is regular and could be accepted by an automaton with at most
       \begin{equation}\label{eqn:sc_group_case} 
        \prod_{j=1}^k ( (|Q|-1)L_j + L_j ) 
         = |Q|^k \left( \prod_{j=1}^k L_j \right) 
       \end{equation}
       states.
      \end{theorem}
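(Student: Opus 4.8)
The plan is to obtain this from the general regularity criterion of Theorem~\ref{thm:regularity_condition} by plugging in the bounds on indices and periods of the unary automata $\mathcal A_p^{(j)}$ established in Propositions~\ref{prop:bound_periods_Ap_group} and~\ref{prop:index_specific_set}. First I would fix $j \in \{1,\ldots,k\}$. By Proposition~\ref{prop:bound_periods_Ap_group}, the period of every automaton $\mathcal A_p^{(j)}$ with $p \in H_j$ divides $L_j$; hence the least common multiple $P_j$ of all these periods also divides $L_j$, so in particular $P_j \le L_j$. Next I would bound the indices: for each $p \in H_j$, pick a state set label $T$ occurring on the cycle of $\mathcal A_p^{(j)}$ — such a state exists since every finite unary automaton has a nonempty cycle, and by Lemma~\ref{lem:state_labels_grow_grp_case} the cardinality $|T|$ is the same for all states on that cycle. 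As $T \subseteq Q$ we have $|T| \le |Q|$, so Proposition~\ref{prop:index_specific_set} yields that the index of $\mathcal A_p^{(j)}$ is at most $(|T|-1)L_j \le (|Q|-1)L_j$, uniformly in $p$; taking the maximum over $p \in H_j$ gives $I_j \le (|Q|-1)L_j$.

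With these two estimates I would then verify the hypothesis of Theorem~\ref{thm:regularity_condition}: for every $j$ and every $p \in H_j$ the number of reachable states of $\mathcal A_p^{(j)}$, being its index plus its period, is at most $(|Q|-1)L_j + L_j = |Q| L_j$, so $|Q_p^{(j)}| \le N$ with $N := |Q|\cdot \max_{1 \le j \le k} L_j$, a bound independent of $p$ and $j$. Theorem~\ref{thm:regularity_condition} then immediately gives that $\perm(L(\mathcal A))$ is regular and is accepted by an automaton of size $\prod_{j=1}^k (I_j + P_j)$. Finally I would substitute $I_j \le (|Q|-1)L_j$ and $P_j \le L_j$:
\[
  \prod_{j=1}^k (I_j + P_j) \;\le\; \prod_{j=1}^k \bigl((|Q|-1)L_j + L_j\bigr) \;=\; \prod_{j=1}^k |Q| L_j \;=\; |Q|^k \prod_{j=1}^k L_j ,
\]
which is exactly the asserted bound, including the displayed identity in the statement.

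I do not expect a genuinely new obstacle here: the substantial work — understanding how the state set labels evolve along each coordinate direction (Lemma~\ref{lem:state_labels_grow_grp_case}) and detecting when the unary automata become periodic (Lemma~\ref{lem:period_Ap_group}, Corollary~\ref{cor:Li_on_cycle}, Propositions~\ref{prop:bound_periods_Ap_group} and~\ref{prop:index_specific_set}) — has already been carried out in the preceding results. The only point demanding a little care is making the index bound from Proposition~\ref{prop:index_specific_set} uniform over $p \in H_j$, which is handled by the crude estimate $|T| \le |Q|$ for any cycle label $T$ together with Lemma~\ref{lem:state_labels_grow_grp_case}, which guarantees that such a $T$ is well defined regardless of which cycle state is chosen. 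Beyond that, the proof is just the bookkeeping of inserting the two bounds into the product supplied by Theorem~\ref{thm:regularity_condition}.
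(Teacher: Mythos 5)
Your proof is correct and follows essentially the same route as the paper's: bound the periods by $L_j$ via Proposition~\ref{prop:bound_periods_Ap_group}, bound the indices by $(|Q|-1)L_j$ via Proposition~\ref{prop:index_specific_set} (using $|T|\le|Q|$ for any cycle label $T$), and plug both into the bound $\prod_{j=1}^k(I_j+P_j)$ from Theorem~\ref{thm:regularity_condition}. The only cosmetic imprecision is identifying $|Q_p^{(j)}|$ with the number of reachable states; as formally defined, $Q_p^{(j)}=\mathcal P(Q)\times[I+P]$, but the footnote to Theorem~\ref{thm:regularity_condition} already records that boundedness of $|Q_p^{(j)}|$ is equivalent to uniformly bounded index and period, which is exactly what you establish, so the argument goes through unchanged.
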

      \begin{proof} 
      
       First note that Proposition \ref{prop:index_specific_set} gives in particular
       that the indices of all automata are at most $(|Q|-1) L_j$. 
       %
       Also Proposition \ref{prop:bound_periods_Ap_group} yields the bound $L_j$
       for the periods.
       So Theorem \ref{thm:regularity_condition} gives the result. $\qed$
      \end{proof}
     
     
        \begin{example}\label{ex:perm_aut} \footnotesize
  Let $\Sigma = \{a_1, a_2\}$ and consider the permutation automaton from Figure~\ref{fig::ex_perm_aut}.
  It is the same automaton as given in~\cite{DBLP:journals/iandc/GomezGP13}. As an example
  for the group language case, we give its state labelling on $\mathbb N_0^k$
  and an automaton for the commutative closure, constructed from the unary automata $\mathcal A_p^{(j)}$. 
  Note that this is not the minimal automaton, which could be found in \cite{DBLP:journals/iandc/GomezGP13}. 
   Also, note that, with the notational convention from Assumption \ref{ass:grp_case},
   we have $L_1 = 3$ and $L_2 = 2$. Hence Theorem \ref{thm:sc_group_case}
   gives the bound $3^2 \cdot 6 = 54$. The automaton constructed from the unary automaton
   $\mathcal A_{p}^{(j)}$ is much smaller here, as the indices stabilize much faster than given by
   the theoretical bound.
  \begin{figure}[htb]
\begin{minipage}{0.3\textwidth}
\scalebox{.87}{
\begin{tikzpicture}[>=latex',shorten >=1pt,node distance=2cm,on grid,auto]
 \node[state, initial, accepting] (1) {$s_0$};
 \node[state]                     (3) [below right of=1] {$s_2$};
 \node[state]                     (2) [above right of=3] {$s_1$};
 \node at (1.2,-3.5) {$Q = \{s_0, s_1, s_2\}$};
 
 \path[->] (1) edge [bend left=20] node {$a_1, a_2$} (2);
 \path[->] (2) edge [bend left=20] node {$a_2$} (1);
 \path[->] (3) edge [left] node {$a_1$} (1);
 \path[->] (2) edge [right] node {$a_1$} (3);
 \path[->] (3) edge [loop below] node {$a_2$} (3);
  
\end{tikzpicture}}
\end{minipage}%
\begin{minipage}{0.7\textwidth}
\scalebox{.85}{
    \begin{tikzpicture}[yscale=1.1, xscale=1.45] 

        \node at (0,0) (0p0) {$\{s_0\}$};
        \node at (1,0) (1p0) {$\{s_1\}$};
        \node at (2,0) (2p0) {$\{s_2\}$};
        \node at (3,0) (3p0) {$\{s_0\}$};
        \node at (4,0) (4p0) {$\{s_1\}$};
        \node at (5,0) (5p0) {$\ldots$};
         
        \node at (0,1) (0p1) {$\{s_1\}$};
        \node at (0,2) (0p2) {$\{s_0\}$};
        \node at (0,3) (0p3) {$\{s_1\}$};
        \node at (0,4) (0p4) {$\{s_0\}$};
        \node at (0,5) (0p5) {$\vdots$};
        \node at (1,5) (1p5) {$\vdots$};
        \node at (2,5) (2p5) {$\vdots$};
        \node at (3,5) (3p5) {$\vdots$};
        \node at (4,5) (4p5) {$\vdots$};
        
        \node at (1,1) (1p1) {$\{s_0,s_2\}$};
        \node at (2,1) (2p1) {$Q$};
        \node at (3,1) (3p1) {$Q$};
        \node at (4,1) (4p1) {$Q$};
        \node at (5,1) (5p1) {$\ldots$};  
        
        \node at (1,2) (1p2) {$\{s_1,s_2\}$};
        \node at (2,2) (2p2) {$Q$};
        \node at (3,2) (3p2) {$Q$};
        \node at (4,2) (4p2) {$Q$};
        \node at (5,2) (5p2) {$\ldots$};  
        
        \node at (1,3) (1p3) {$\{s_0,s_2\}$};
        \node at (2,3) (2p3) {$Q$};
        \node at (3,3) (3p3) {$Q$};
        \node at (4,3) (4p3) {$Q$};
        \node at (5,3) (5p3) {$\ldots$}; 
        
        \node at (1,4) (1p4) {$\{s_1,s_2\}$};
        \node at (2,4) (2p4) {$Q$};
        \node at (3,4) (3p4) {$Q$};
        \node at (4,4) (4p4) {$Q$};
        \node at (5,4) (5p4) {$\ldots$};

       \path[->] (0p0) edge [below] node {$a_1$} (1p0)
                 (1p0) edge [below] node {$a_1$} (2p0)
                 (2p0) edge [below] node {$a_1$} (3p0)
                 (3p0) edge [below] node {$a_1$} (4p0)
                 (4p0) edge [below] node {$a_1$} (5p0);
                 
       \path[->] (0p0)  edge [left] node {$a_2$} (0p1)
                 (0p1)  edge [left] node {$a_2$} (0p2)
                 (0p2)  edge [left] node {$a_2$} (0p3)
                 (0p3)  edge [left] node {$a_2$} (0p4)
                 (0p4)  edge [left] node {$a_2$} (0p5);
                 
       \path[->] (0p1) edge [below] node {$a_1$} (1p1)
                 (1p1) edge [below] node {$a_1$} (2p1)
                 (2p1) edge [below] node {$a_1$} (3p1)
                 (3p1) edge [below] node {$a_1$} (4p1)
                 (4p1) edge [below] node {$a_1$} (5p1);
                 
       \path[->] (0p2) edge [below] node {$a_1$} (1p2)
                 (1p2) edge [below] node {$a_1$} (2p2)
                 (2p2) edge [below] node {$a_1$} (3p2)
                 (3p2) edge [below] node {$a_1$} (4p2)
                 (4p2) edge [below] node {$a_1$} (5p2);
                 
       \path[->] (0p3) edge [below] node {$a_1$} (1p3)
                 (1p3) edge [below] node {$a_1$} (2p3)
                 (2p3) edge [below] node {$a_1$} (3p3)
                 (3p3) edge [below] node {$a_1$} (4p3)
                 (4p3) edge [below] node {$a_1$} (5p3);
                 
       \path[->] (0p4) edge [below] node {$a_1$} (1p4)
                 (1p4) edge [below] node {$a_1$} (2p4)
                 (2p4) edge [below] node {$a_1$} (3p4)
                 (3p4) edge [below] node {$a_1$} (4p4)
                 (4p4) edge [below] node {$a_1$} (5p4);
                 
       \path[->] (1p0)  edge [left] node {$a_2$} (1p1)
                 (1p1)  edge [left] node {$a_2$} (1p2)
                 (1p2)  edge [left] node {$a_2$} (1p3)
                 (1p3)  edge [left] node {$a_2$} (1p4)
                 (1p4)  edge [left] node {$a_2$} (1p5);
                 
       \path[->] (2p0)  edge [left] node {$a_2$} (2p1)
                 (2p1)  edge [left] node {$a_2$} (2p2)
                 (2p2)  edge [left] node {$a_2$} (2p3)
                 (2p3)  edge [left] node {$a_2$} (2p4)
                 (2p4)  edge [left] node {$a_2$} (2p5);
                 
       \path[->] (3p0)  edge [left] node {$a_2$} (3p1)
                 (3p1)  edge [left] node {$a_2$} (3p2)
                 (3p2)  edge [left] node {$a_2$} (3p3)
                 (3p3)  edge [left] node {$a_2$} (3p4)
                 (3p4)  edge [left] node {$a_2$} (3p5);
                 
       \path[->] (4p0)  edge [left] node {$a_2$} (4p1)
                 (4p1)  edge [left] node {$a_2$} (4p2)
                 (4p2)  edge [left] node {$a_2$} (4p3)
                 (4p3)  edge [left] node {$a_2$} (4p4)
                 (4p4)  edge [left] node {$a_2$} (4p5);

    \node at (-1.3, -0.3) {$(I_1, P_1)$};             
    \node at (-0.7,-0.7) {$(I_2, P_2)$};
    \draw (-1.3,-0.6) -- (-0.3,-0.3);
    
    \node at (0, -0.5) {$(0, 2)$};     
    \node at (1, -0.5) {$(1, 2)$};   
    \node at (2, -0.5) {$(1, 2)$};   
    \node at (3, -0.5) {$(1, 2)$};   
    \node at (4, -0.5) {$(1, 2)$};   
     
    \node at (-0.7, 0) {$(0, 3)$};     
    \node at (-0.7, 1) {$(2, 3)$};   
    \node at (-0.7, 2) {$(2, 3)$};   
    \node at (-0.7, 3) {$(2, 3)$};   
    \node at (-0.7, 4) {$(2, 3)$};    
    \end{tikzpicture}}
\end{minipage}
\begin{center} 
\scalebox{.73}{ 
\begin{tikzpicture}[>=latex',shorten >=1pt,node distance=2cm,on grid,auto]
 \node[state, initial, accepting] (1) {};
 \node[state]                     (2) [right of=1] {};
 \node[state]           (3) [right of=2] {};
 \node[state,accepting]                     (4) [right of=3] {};
 \node[state]           (5) [right of=4] {};
 
 \node[state] (6) [below of=1] {};
 \node[state,accepting] (7) [right of=6] {};
 \node[state,accepting] (8) [right of=7] {};
 \node[state,accepting] (9) [right of=8] {};
 \node[state,accepting] (10) [right of=9] {};
 
 \node[state,accepting] (11) [below of=6] {};
 \node[state] (12) [right of=11] {};
 \node[state,accepting] (13) [right of=12] {};
 \node[state,accepting] (14) [right of=13] {};
 \node[state,accepting] (15) [right of=14] {};
 
 \path[->] (1) edge node {$a_1$} (2);
 \path[->] (2) edge node {$a_1$} (3);
 \path[->] (3) edge node {$a_1$} (4);
 \path[->] (4) edge node {$a_1$} (5);
 \path[->] (5) edge  [bend right=40] node [above] {$a_1$} (3);
 
 \path[->] (6) edge node {$a_1$} (7);
 \path[->] (7) edge node {$a_1$} (8);
 \path[->] (8) edge node {$a_1$} (9);
 \path[->] (9) edge node {$a_1$} (10);
 \path[->] (10) edge  [bend right=30] node [above] {$a_1$} (8);
 
 \path[->] (11) edge node {$a_1$} (12);
 \path[->] (12) edge node {$a_1$} (13);
 \path[->] (13) edge node {$a_1$} (14);
 \path[->] (14) edge node {$a_1$} (15);
 \path[->] (15) edge  [bend left=40] node [above] {$a_1$} (13);
 
 \path[->] (1)  edge node {$a_2$} (6);
 \path[->] (6)  edge [bend left=20] node {$a_2$} (11);
 \path[->] (11) edge [bend left=20] node {$a_2$} (6);
 
 \path[->] (2)  edge node {$a_2$} (7);
 \path[->] (7)  edge [bend left=20] node {$a_2$} (12);
 \path[->] (12) edge [bend left=20] node {$a_2$} (7);
 
 \path[->] (3)  edge node {$a_2$} (8);
 \path[->] (8)  edge [bend left=20] node {$a_2$} (13);
 \path[->] (13) edge [bend left=20] node {$a_2$} (8);
 
 \path[->] (4)  edge [bend left=30] node {$a_2$} (9);
 \path[->] (9)  edge [bend left=20] node {$a_2$} (14);
 \path[->] (14) edge [bend left=20] node {$a_2$} (9);
 
 \path[->] (5)  edge node {$a_2$} (10);
 \path[->] (10) edge [bend left=20] node {$a_2$} (15);
 \path[->] (15) edge [bend left=20] node {$a_2$} (10);
\end{tikzpicture}}
\end{center}
 \caption{\footnotesize The constructions from this paper for a permutation automaton,
 different from the one given in Figure \ref{fig::example}. In the state labelling
 of $\mathbb N_0^{|\Sigma|}$, the origin is in the bottom left corner, labeled by $\{s_0\}$.
 Also indicated, written beneath, or to the side of, the axes, are the indices and periods
 of the unary automata in the direction of $\varphi(a_j)$ from Definition \ref{def:sequ_grid_decomp_aut}.
 See Example \ref{ex:perm_aut} for explanations.}
    \label{fig::ex_perm_aut}
\end{figure}
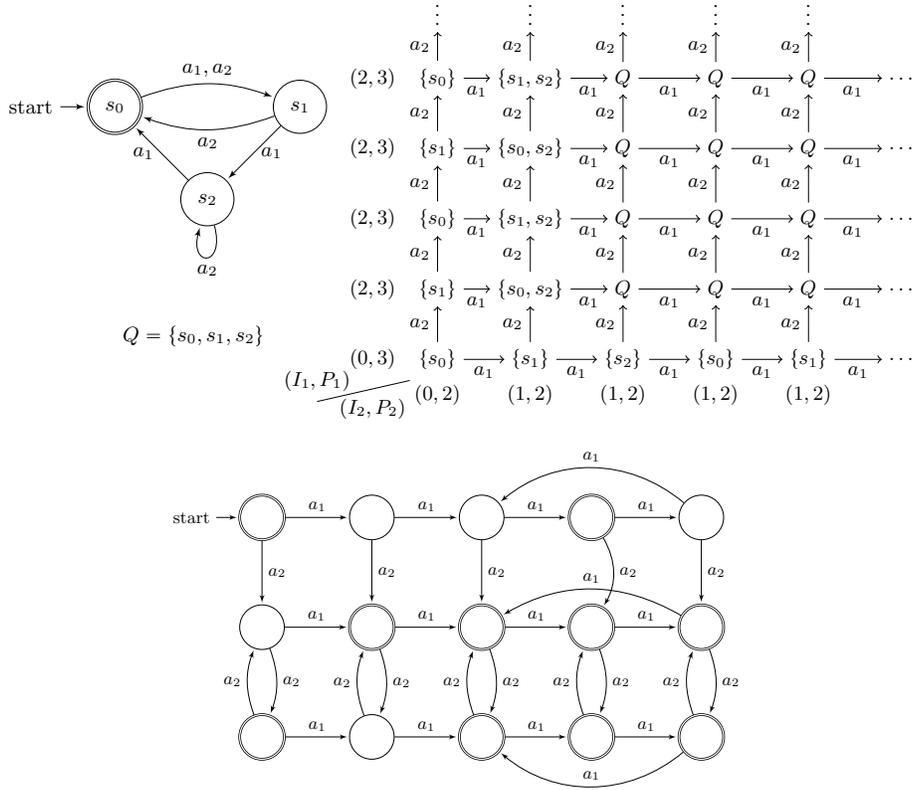
\end{example}
    
       \begin{example} \footnotesize
        Let $\Sigma = \{a_1, a_2\}$ and consider 
        $\mathcal A = (\Sigma, Q, \delta, s_0, F)$ with $Q = [n]$
        for some $n \ge 1$, $s_0$ and $F$ arbitrary, and $\delta(0, a_1) = 1, \delta(1, a_1) = 0, \delta(x,a_1) = x$
        for $x \in \{2,\ldots,n-1\}$, $\delta(x, a_2) = (x+1) \bmod n$ for $x\in [n]$.
        Then $\perm(L(\mathcal A))$ could be accepted by an automaton
        of size $2n^3$.
       \end{example}
       
      
       As stated in \cite{GaoMRY17}, the maximal order of any permutation
       on a set of size $n$ is given by Landau's function, which 
       is asymptotically like $e^{\Theta(\sqrt{n\ln n})}$. 
       Hence, Theorem~\ref{thm:state_compl_grp_lang},
       the asymptotic form of Theorem~\ref{thm:sc_group_case}, is implied.

\section{Conclusion}





We have shown that the commutative closure of regular group languages is regular, 
and have derived a bound on the size of the resulting automaton. 
The size
is related to the least common multiples of the cycle lengths of the letters,
viewed as permutations on the states, see Equation \eqref{eqn:sc_group_case}.
I do not know if the bound is sharp. I have not found a single example
that has the property
that the index of the constructed automata $\mathcal A_p^{(j)}$,
for the letter $a_j$, has length $(|Q|-1)\cdot L_j$,
as would be necessary to reach the bound stated in Theorem \ref{thm:sc_group_case}.
In fact, I believe that the cycles on individual elements of the state label are never ``traversed'' in its entirety
before another element is added to the state label, or we reach the final cycle of the unary automata $\mathcal A_p^{(j)}$.
So I conjecture that for larger alphabets we can improve this bound, as the state
labels grow faster in the index part of the automata $\mathcal A_p^{(j)}$,
as more predecessor automata\footnote{For 
some automaton $\mathcal A_p^{(j)}$, with $p \in H_j$ and $j \in \{1,\ldots, k\}$,
all automata $\mathcal A_q^{(j)}$ with $p = q + \psi(b)$, for some $b \in \Sigma$,
are called \emph{predecessor automata} of $\mathcal A_p^{(j)}$.}
add states of the original automaton to the state labels of $\mathcal A_p^{(j)}$, as
inputs are read.
This is somehow contrary to what usually happens in other existing state complexity
results, namely that we need larger alphabets to reach the state bounds, see for example 
\cite{DBLP:journals/ijfcs/HanS08,DBLP:conf/dlt/HanS07,BrzozowskiJLRS16}.
In our situation, I somehow conjecture that for larger alphabets (where surely, distinct letters have to be
distinct permutations), indices
of the unary automata $\mathcal A_p^{(j)}$ get smaller and smaller. Hence
the overall state complexity bound reaches the product of the least common multiples
of the cycle lengths for all letters, i.e., we have $\prod_{j=1}^k L_j$ as a bound in the limit for $k \to \infty$,
with an alphabet of size $k$.


\smallskip \noindent \footnotesize
\textbf{Acknowledgement.} I thank my supervisor, Prof. Dr. Henning Fernau, for giving valuable feedback and remarks on the content of this article that improved its presentation.

\todo[inline]{Wieso gibt es hier unsere JALC Arbeit doppelt? Einmal mit, einmal ohne DOI. Du solltest auch mal auf die 16 Fehlermeldungen achten, die OverLeaf anzeigt. Besser sind 0 Fehler.

SH: Berichtigt. Aber zwei Warnungen bekomme ich nicht weg. Und diese
\texttt{overfull hhbox}-Warnungen find ich auch immer recht mysteriös.}

\bibliographystyle{splncs04}
\bibliography{ms}

\clearpage
\section{Appendix}
Here, we collect some proofs not given in the main text. 
Our first Lemma \ref{lem:ind_form_state_label} is not stated in the main text, as it is essentially
only used in the proofs of this appendix.

\begin{lemma}{(inductive form of state label function)}\label{lem:ind_form_state_label}
 Let $\mathcal A = (\Sigma, Q, \delta, s_0, F)$ be some finite automaton
 and $\sigma_{\mathcal A} : \mathbb N_0^k \to \mathcal P(Q)$ 
 the state label function from Definition \ref{def:state-label-function}.
 Then we have $\sigma_{\mathcal A}(0,\ldots, 0) =\{s_0\}$, and
 \begin{equation}\label{eqn:state_label_fn_ind_form}
  \sigma_{\mathcal A}(p) =  \bigcup_{ \substack{ (q,b) \\ p = q + \psi(b) }} \delta( \sigma_{\mathcal A}(q), b ) 
 \end{equation} 
 for $p \ne (0,\ldots 0)$.
\end{lemma}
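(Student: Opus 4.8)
The statement to prove is Lemma~\ref{lem:ind_form_state_label}, the inductive form of the state label function:
\[
\sigma_{\mathcal A}(0,\ldots,0) = \{s_0\}, \qquad
\sigma_{\mathcal A}(p) = \bigcup_{\substack{(q,b)\\ p = q + \psi(b)}} \delta(\sigma_{\mathcal A}(q), b)
\quad\text{for } p \ne (0,\ldots,0).
\]

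The plan is to unfold the definition $\sigma_{\mathcal A}(p) = \{\delta(s_0,u) : \psi(u) = p\}$ and argue by double inclusion, the whole thing essentially being bookkeeping on words versus Parikh vectors. First I would dispose of the base case: the only word $u$ with $\psi(u) = (0,\ldots,0)$ is $u = \varepsilon$, so $\sigma_{\mathcal A}(0,\ldots,0) = \{\delta(s_0,\varepsilon)\} = \{s_0\}$.

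For the inductive equation, fix $p \ne (0,\ldots,0)$. For the inclusion ``$\subseteq$'': take any $s \in \sigma_{\mathcal A}(p)$, so $s = \delta(s_0,u)$ for some word $u$ with $\psi(u) = p$. Since $p \ne (0,\ldots,0)$, $u$ is nonempty, so write $u = vb$ with $b \in \Sigma$ the last letter and $v \in \Sigma^*$. Then $\psi(v) = p - \psi(b) =: q \in \mathbb N_0^k$ (the subtraction stays in $\mathbb N_0^k$ because $|u|_b \ge 1$), hence $p = q + \psi(b)$, and $\delta(s_0,v) \in \sigma_{\mathcal A}(q)$, so $s = \delta(\delta(s_0,v),b) \in \delta(\sigma_{\mathcal A}(q),b)$, which lies in the right-hand union. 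For the inclusion ``$\supseteq$'': take any pair $(q,b)$ with $p = q + \psi(b)$ and any $t \in \delta(\sigma_{\mathcal A}(q),b)$; then $t = \delta(s',b)$ for some $s' \in \sigma_{\mathcal A}(q)$, i.e.\ $s' = \delta(s_0,v)$ with $\psi(v) = q$. Set $u = vb$; then $\psi(u) = q + \psi(b) = p$ and $\delta(s_0,u) = \delta(s',b) = t$, so $t \in \sigma_{\mathcal A}(p)$.

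There is essentially no obstacle here — the only point requiring a moment's care is confirming that every nonempty word contributing to $\sigma_{\mathcal A}(p)$ factors through \emph{some} legitimate predecessor pair $(q,b)$ with $q \in \mathbb N_0^k$, which is immediate once one observes that the last letter $b$ of the word satisfies $|u|_b \ge 1$ so that $p - \psi(b)$ has no negative coordinate. No induction on $|p|$ is actually needed; the equation is a direct consequence of splitting off the last letter, and that is how I would present it.
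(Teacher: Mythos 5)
Your proof is correct and follows essentially the same route as the paper's: both arguments reduce to peeling off the last letter of a word with Parikh image $p$, the paper presenting this as a single chain of set equalities while you present the identical content as a two-sided inclusion. No gap; the one point that needs care (that $p - \psi(b)$ stays in $\mathbb N_0^k$ for the last letter $b$) is exactly what you flagged.
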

\begin{proof}
  If $p = (0,\ldots, 0)$, then $\{ \delta(s_0, w) : \psi(w) = p \} = \{\delta(s_0, \varepsilon)\} = \{s_0\}$.
  Suppose $p \ne (0,\ldots, 0)$.
  Then 
  \begin{align*}  
     & \bigcup_{ \substack{ (q,b) \\ p = q + \psi(b) }} \delta( \sigma_{\mathcal A}(q), b ) \\
     & = \bigcup_{ \substack{ (q,b) \\ p = q + \psi(b) }} \delta( \{ \delta(s_0, w) \mid \exists w \in \Sigma^* : \psi(w) = q \}, b) \\ 
     & = \bigcup_{ \substack{ (q,b) \\ p = q + \psi(b) }} \{ \delta(s_0, wb) \mid \exists w \in \Sigma^* : \psi(w) = q \} \\ 
     & = \bigcup_{ \substack{ (q,b) \\ p = q + \psi(b) }} \{ \delta(s_0, wb) \mid \exists w \in \Sigma^* : \psi(w) + \psi(b) = p, \psi(w) = q \} \\ 
     & = \{ \delta(s_0, wb) \mid \exists w \in \Sigma^* \exists b \in \Sigma : \psi(wb) = p \} \\ 
     & = \{ \delta(s_0, u) \mid \exists u \in \Sigma^* : \psi(u) = p \}. \quad\qed
  \end{align*}
\end{proof}

We also need the next Lemma here.

\begin{restatable}[]{lemma}{unaryperioddivides}\label{lem:unary_period_divides}
 Let $\mathcal A = (\Sigma, Q, \delta, s_0, F)$ be some unary automaton.
 If $\delta(s, a^k) = s$ for some state $s \in Q$ 
 and number $k > 0$, then $k$ is divided by the period of $\mathcal A$.
\end{restatable}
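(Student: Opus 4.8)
The plan is to use that the period $p$ of a unary automaton is precisely the length of the cycle into which the orbit $s_0,\delta(s_0,a),\delta(s_0,a^2),\dots$ eventually settles, together with the elementary fact that a nonempty set of positive integers closed under taking positive differences consists exactly of the multiples of its least element.

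First I would unwind the definition of the index $i$ and period $p$ from Section~\ref{sec:unary}: we have $\delta(s_0,a^m)=\delta(s_0,a^{m+p})$ for all $m\ge i$, hence, iterating, $\delta(s_0,a^m)=\delta(s_0,a^{m+tp})$ for all $t\ge 0$ and all $m\ge i$. Next I would verify that the $p$ cycle states $\delta(s_0,a^i),\dots,\delta(s_0,a^{i+p-1})$ are pairwise distinct: if $\delta(s_0,a^{i+c})=\delta(s_0,a^{i+d})$ with $0\le c<d<p$, then applying $a^{p-c}$ to both sides and using $\delta(s_0,a^{i+p})=\delta(s_0,a^i)$ gives $\delta(s_0,a^i)=\delta(s_0,a^{i+(d-c)})$ with $0<d-c<p$, contradicting the minimality of $p$. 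In particular the cycle has exactly $p$ elements.

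Now take a state $s$ with $\delta(s,a^k)=s$, $k>0$. In the way the lemma is used, $s$ is reachable from $s_0$, and then $\delta(s,a^k)=s$ forces $s$ to lie on the cycle, so $s=\delta(s_0,a^m)$ for some $m\ge i$. Put $P_s=\{\ell>0:\delta(s,a^\ell)=s\}$. Going once around the cycle shows $p\in P_s$; moreover the orbit of $s$ under $a$ coincides with the whole cycle, which has $p$ elements by the previous step, so $p=\min P_s$. The set $P_s$ is closed under positive differences: if $\ell>\ell'$ are both in $P_s$ then $\delta(s,a^{\ell-\ell'})=\delta(\delta(s,a^{\ell'}),a^{\ell-\ell'})=\delta(s,a^\ell)=s$. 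Therefore every element of $P_s$ is a multiple of $\min P_s=p$, and applying this to $k\in P_s$ yields $p\mid k$, as claimed.

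I expect the only delicate point to be the bookkeeping around automata that are not initially connected: the period is defined via the orbit of $s_0$, so a state $s$ lying on some cycle not reachable from $s_0$ can have $\delta(s,a^k)=s$ without $p\mid k$. In every use of this lemma $s$ is obtained as $\delta(s_0,a^m)$, so I would either add the hypothesis that $s$ is reachable or phrase the statement in terms of such states; with that caveat in place, everything above is routine.
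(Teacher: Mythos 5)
Your proof is correct and follows essentially the same route as the paper's: first place $s$ on the cycle, then use minimality of the period $p$ to conclude $p \mid k$. The only cosmetic difference is that you package the final step as ``a set of positive integers closed under positive differences consists of multiples of its minimum,'' where the paper instead writes $k = np + r$ and traces $\delta(s_0,a^{i+k}) = \delta(s_0,a^{i+r}) = \delta(s_0,a^i)$ to force $r = 0$; these are the same argument.

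Your caveat about reachability is a genuine and worthwhile observation. As stated, the lemma quantifies over \emph{all} $s \in Q$, yet the period is defined solely from the orbit of $s_0$; a state $s$ on a cycle disjoint from that orbit can satisfy $\delta(s,a^k)=s$ with $k$ not a multiple of $p$. The paper's own proof silently assumes reachability in the sentence ``First note that $s$ is on the cycle of $\mathcal A$ \ldots\ as otherwise $i$ would not be minimal,'' which only makes sense if $s = \delta(s_0,a^m)$ for some $m$. In the one place the lemma is invoked (Lemma~\ref{lem:period_Ap_group}), $s$ is indeed $\delta_p^{(j)}(s_p^{(0,j)},a_j^{m})$, so the use is sound; nevertheless the hypothesis ``$s$ is reachable from $s_0$'' (or equivalently ``$\mathcal A$ is initially connected'') ought to be added to the statement, exactly as you propose.
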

\begin{proof}
 Let $i$ be the index, and $p$ the period of $\mathcal A$.
 We write $k = np + r$ with $0 \le r < p$.
 First note that $s$ is on the cycle of $\mathcal A$, i.e.,
 $$
  s \in \{ \delta(s_0, a^i), \delta(s_0, a^{i+1}),\ldots, \delta(s_0,a^{i+p-1}) \}
 $$
 as otherwise $i$ would not be minimal. Then if $s = \delta(s_0, a^{i+j})$ for some $0 \le j < p$
 we have $\delta(s_0, a^{i+k}) = 
 \delta(s_0, a^{i+p+k}) = \delta(s_0, a^{i+j+k+(p-j)}) = \delta(s_0,a^{i+j+(p-j)}) = \delta(s_0,a^i)$.
 So $\delta(s_0,a^i) = \delta(s_0,a^{i+k}) = \delta(s_0, a^{i+np + r}) = \delta(s_0, a^{i+r})$
 which gives $r = 0$ by minimality of $p$. $\qed$
\end{proof}

     The next simple observation will be used. 
     
     \begin{lemma} \label{lem:lcm_cycle_subset} 
      Let $A \subseteq [n]$. 
      Then if $m$ is the least common multiple of  all the different cycles lengths of elements
      from $A$
      for a permutation $\pi : [n] \to [n]$, then $\pi^m_{|A} = \operatorname{id}_{|A}$,
      in particular $\pi^m(A) = A$. 
     \end{lemma} 
     

\subsection{Proof of Proposition~\ref{prop:conn_parikh} (See page~\pageref{prop:conn_parikh})}  
\connparikh*
 \begin{proof} 
   Notation as in the statement of the Proposition.
   For $p \in \mathbb N_0^k$, we have
   \begin{align*}
      \sigma_{\mathcal A}(p) \cap F \ne \emptyset 
       & \Leftrightarrow \{ \delta(s_0, w) \mid \exists w \in \Sigma^* : \psi(w) = p \} \cap F \ne \emptyset \\ 
       & \Leftrightarrow \exists w \in \Sigma^* : \delta(s_0, w) \in F \mbox{ and } \psi(w) = p \\
       & \Leftrightarrow \exists w \in \Sigma^* : w \in L(\mathcal A) \mbox{ and } \psi(w) = p \\
       & \Leftrightarrow p \in \psi(L(\mathcal A)) \quad \qed
   \end{align*}
\end{proof}

\begin{remark}{(induction scheme used)} \label{rem:inductive_form} 
In certain proofs, namely of Proposition \ref{prop:grid_aut_decomp}, Proposition \ref{prop:bound_periods_Ap_group}
and Proposition \ref{prop:index_specific_set}, we argue in an inductive fashion. Also, the
formulation of Lemma \ref{lem:ind_form_state_label} is inductive. This comes from 
the inductive form that the automata from Definition \ref{def:sequ_grid_decomp_aut}
are defined, or the recursive way that they are created from previous automata.
Just in case you are wondering why, in the inductive proofs
of Proposition \ref{prop:bound_periods_Ap_group} and Proposition \ref{prop:index_specific_set}, no base case is explicitly stated,
I will give some justification for that in the next paragraph. But in case you are not wondering, you might well skip
this explanation. Suppose we have some property $A$
that we want to show is true for all automata
$\mathcal A_p^{(j)} = (\{a_j\}, Q_p^{(j)}, \delta_p^{(j)}, s_p^{(0,j)}, F_p^{(j)})$
from Definition \ref{def:sequ_grid_decomp_aut}, where $j \in \{1,\ldots, k\}$
and $p \in H_j$, the hyperplane
         from Definition \ref{def:hyperplance}.
Then, our induction scheme is the following.

\begin{quote}
    Fix some $j \in \{1,\ldots, k\}$.
    If we can show property $A$ for $\mathcal A_p^{(j)}$
    under the assumption that it is true for all automata
    from the set
    $$
     \mathcal P = \{ \mathcal A_q^{(j)} \mid  p = q + \psi(b) \mbox{ for some } b \in \Sigma \},
    $$
    then it is true for all automata $\mathcal A_p^{(j)}$,
    for $p \in H_j$ arbitrary.
\end{quote}

In all our cases, the base case is when $\mathcal P = \emptyset$, and our arguments
will work in that case too. Hence, there is no need to treat that as a special (induction base)
case. More specifically, we will use the maximal index and the least common multiple
of the automata from $\mathcal P$. As $\max\emptyset = 0$ and $\lcm \emptyset = 1$, by definition,
the arguments, given below in the proofs, will work with these values. Even the original Definition \ref{def:sequ_grid_decomp_aut}
has no explicit base case, but relies on these definitions\footnote{Note
that $\mathcal P = \emptyset$ if and only $p = (0,\ldots,0)$, in which
case $\mathcal A_p^{(j)}$ is isomorphic to the starting automaton $\mathcal A$.}.

This is related to the fact that $\mathbb N_0^k$ is well-quasi order (or a well partial order to be
more specific). So also the points from $H_j$ are well-quasi ordered. 
As $\psi(a_j) = (0, \ldots, 0, 1, 0, \ldots, 0)$, where the one appears precisely at the $j$-th position,
the condition $p = q + \psi(a_j)$ says that $q$ is an immediate predecessor point.
The induction scheme we use hence reduces to an induction scheme
over this well partial order. A justification of this induction principle
for well-quasi orders could be found, for example, in the thesis \emph{On Well-Quasi-Orderings}\footnote{Thurman,~F.~B. \emph{On Well-Quasi-Orderings}. Thesis, University of Central Florida, Orlando, Florida (2013), \url{http://etd.fcla.edu/CF/CFH0004455/Thurman_Forrest_B_201304_BS.pdf}}, by Forrest B. Thurman.
\end{remark}

\subsection{Proof of Proposition~\ref{prop:grid_aut_decomp} (See page~\pageref{prop:grid_aut_decomp})} 
     \gridautdecomp*
    \begin{proof} 
      Notation as in the statement. For $p = (0, \ldots, 0)$ this is clear.
      If $p_j = 0$, then $p = \overline p$, and, by Equation \eqref{eqn:def_unary_aut_start_state},
      $$
       \pi_1(\delta_{\overline p}^{(j)}(s_{\overline p}^{(0,j)}, \varepsilon))
        = \pi_1( s_{\overline p}^{(0,j)} ) = \sigma_{\mathcal A}(\overline p).
      $$
      Suppose $p_j > 0$ from now on.
      Then, the set $\{ (q,b) \in \mathbb N_0^k \times \Sigma \mid p = q + \psi(b) \}$
      is non-empty, and we can use Equation \eqref{eqn:state_label_fn_ind_form},
      and proceed inductively
      \begin{align}
         \sigma_{\mathcal A}(p) & =  \bigcup_{ \substack{ (q,b) \\ p = q + \psi(b) }} \delta( \sigma_{\mathcal A}(q), b ) \nonumber \\ 
                              & =  \bigcup_{ \substack{ (q,b) \\ p = q + \psi(b) }} \delta( \pi_1(\delta_{\overline q}^{(j)}( s_{\overline q}^{(0,j)}, a_j^{q_j} )), b ) \label{eqn:ind_hyp_decomp_proof}
      \end{align}
      where $q = (q_1, \ldots, q_k)$, and $\overline q = (q_1, \ldots, q_{j-1},0,q_{j+1}, \ldots, q_k) \in H_j$.
      As $p_j > 0$ we have $p = q + \psi(a_j)$ for some unique point $q = (p_1, \ldots, p_{j-1}, p_j - 1, p_{j+1} \ldots, p_k)$. 
      For all other points $r = (r_1, \ldots, r_k)$ with $p = r + \psi(b)$ for some $b\in \Sigma$,
      the condition $r \ne q$ implies $b \ne a_j$ and $r_j = p_j$
      for $r = (r_1, \ldots, r_k)$. Also, if $\overline q \in H_j$ denotes projection to $H_j$, we have $\overline q  = \overline p$
      for our chosen $q$ with $p = q + \psi(a_j)$.
      Hence, taken all this together, we can write Equation \eqref{eqn:ind_hyp_decomp_proof}
      in the form 
      \begin{equation*}
       \sigma_{\mathcal A}(p) = \left( 
          \bigcup_{ \substack{ (r,b), b \ne a_j \\ p = r + \psi(b) }}
          \delta( \pi_1(\delta_{\overline r}^{(j)}( s_{\overline r}^{(0,j)}, a_j^{p_j} )), b ) \right)
          \cup 
          \delta( \pi_1(\delta_{\overline p}^{(j)}( s_{\overline p}^{(0,j)}, a_j^{p_j-1})), a_j).
      \end{equation*}
      Let $b \in\Sigma$. As for $a_j \ne b$, we have that $p = r + \psi(b)$ if and only if $\overline p = \overline r + \psi(b)$,
      with the notation as above for $p, r, \overline p$ and $\overline r = (r_1, \ldots, r_{j-1},0,r_{j+1}, \ldots, r_k)$, we can simplify further and write
      \begin{equation}\label{eqn:decomp_proof_psi_of_p}
       \sigma_{\mathcal A}(p) = \left( 
          \bigcup_{ \substack{ (\overline r,b), \overline r \in H_j \\ \overline p = \overline r + \psi(b) }}
          \delta( \pi_1(\delta_{\overline r}^{(j)}( s_{\overline r}^{(0,j)}, a_j^{p_j} )), b )  \right)
          \cup 
          \delta( \pi_1(\delta_{\overline p}^{(j)}( s_{\overline p}^{(0,j)}, a_j^{p_j-1})), a_j).
      \end{equation}

      \noindent Set $S = \pi_1(\delta_{\overline p}^{(j)}( s_{\overline p}^{(0,j)}, a_j^{p_j-1}))$,
      $T =  \sigma_{\mathcal A}(p)$ and\footnote{Note that for $\overline p \in H_j$, the condition $\overline p = q + \psi(b)$, for some $b \in \Sigma$,
      implies $q \in H_j$ and $b \ne a_j$.}
      $$
       \mathcal P = \{ \mathcal A_r^{(j)} \mid \overline p = r + \psi(b) \mbox{ for some } b \in \Sigma \}. 
      $$
      Let $I$ be the maximal index, and $P$ the least common multiple of all the periods, of
      the unary automata in $\mathcal P$. 
      We distinguish two cases for the value of $p_j > 0$.
      
      \begin{enumerate} 
      \item[(i)] $0 < p_j \le I$.
        
        By Equation \eqref{eqn:def_unary_aut_transition}, $\delta_{\overline p}( s_{\overline p}^{(0,j)}, a_j^{p_j-1} )= (S, p_j-1)$.
        In this case Equation \eqref{eqn:decomp_proof_psi_of_p}
        equals Equation \eqref{eqn:def_seq_grid_aut_transition}, if the state $(S, p_j-1)$ is used
        in Equation \eqref{eqn:def_unary_aut_transition}.
        This
        gives 
        $$
         \delta_{\overline p}^{(j)}( (S, p_j-1), a_j ) = (T, p_j).
        $$
        Hence $\pi_1(\delta_{\overline p}^{(j)}( (S, p_j-1), a_j )) = T = \sigma_{\mathcal A}(p)$.
      
      \item[(ii)] $I < p_j$.
      
        Set $y = I + ((p_j - 1 - I) \bmod P)$.
        Then $I \le y < I + P$.
        By Equation \eqref{eqn:def_unary_aut_transition},
        $
           \delta_{\overline p}^{(j)}( s_{\overline p}^{(0,j)}, a_j^{p_j-1} ) = ( S, y ).
        $
        So, also by Equation \eqref{eqn:def_unary_aut_transition},
        $$ 
          \delta_{\overline p}^{(j)}( s_{\overline p}^{(0,j)}, a_j^{p_j} ) = 
          \delta_{\overline p}^{(j)}( (S,y), a_j ) = 
          \left\{ 
           \begin{array}{ll}
            (R, y + 1) & \mbox{ if } I \le y < I + P - 1 \\ 
            (R, I)     & \mbox{ if } y = I + P - 1,
           \end{array}
          \right.
        $$
        where, by Equation \eqref{eqn:def_seq_grid_aut_transition},
        \begin{equation}\label{eqn:R}
         R = \delta(S, a_j) \cup \bigcup_{\substack{(\overline r, b) \\ \overline p = \overline r + \psi(b)}} \delta(\pi_1(\delta_{\overline r}^{(j)}(s_{\overline r}^{(0,j)}, a_j^{y+1})), b).
        \end{equation}
        Let $\overline r \in H_j$ with $\overline p = \overline r + \psi(b)$
        for some $b \in \Sigma$, and $\overline p \in H_j$ the point from the statement of this Proposition.
        Then, as the period of $\mathcal A_{\overline r}^{(j)}$ divides $P$, and $y$ is greater than or equal to
        the index of $\mathcal A_{\overline r}^{(j)}$, we have
        $$
          \delta_{\overline r}^{(j)}(s_{\overline r}^{(0,j)}, a_j^{p_j-1}) = \delta_{\overline r}^{(j)}(s_{\overline r}^{(0,j)}, a_j^{y}).
        $$
        So $\delta_{\overline r}^{(j)}(s_{\overline r}^{(0,j)}, a_j^{p_j}) = \delta_{\overline r}^{(j)}(s_{\overline r}^{(0,j)}, a_j^{y+1})$.
        Hence, comparing Equation \eqref{eqn:R} with Equation \eqref{eqn:decomp_proof_psi_of_p},
        we find $R = T$. $\qed$

        \medskip

        Alternative argument for case (ii): Set $x = I + ( (p_j - I) \bmod P )$. Then $I \le x < I + P$. By Equation \eqref{eqn:def_unary_aut_transition},
        $$
         \delta_{\overline p}^{(j)}( s_{\overline p}^{(0,j)}, a_j^{p_j-1} )
          = \left\{ \begin{array}{ll}
           (S, x - 1) & \mbox{ if } x > I \\ 
           (S, I + P - 1) & \mbox{ otherwise, if } x = I.
          \end{array}\right.
        $$
        We have
        $
         \delta_{\overline p}^{(j)}( s_{\overline p}^{(0,j)}, a_j^{p_j} )
           = \delta_{\overline p}^{(j)}(\delta_{\overline p}^{(j)}( s_{\overline p}^{(0,j)}, a_j^{p_j-1} ), a_j).
        $
        Hence, if $I < x < I + P$ we get
        $$
         \delta_{\overline p}^{(j)}( s_{\overline p}^{(0,j)}, a_j^{p_j} ) 
          = \delta_{\overline p}^{(j)}( (S, x - 1), a_j) = (R, x),
        $$
        with, by Equation \eqref{eqn:def_seq_grid_aut_transition},
        $$
         R = \delta(S, a_j)
           \cup \bigcup_{\substack{(\overline q, b) \\ \overline p = \overline q + \psi(b)}} \delta(\pi_1(\delta_q^{(j)}(s_q^{(0,j)}, a_j^{x})), b), 
        $$
        and if $x = I$, we get
        $$
         \delta_{\overline p}^{(j)}( s_{\overline p}^{(0,j)}, a_j^{p_j} ) 
          = \delta_{\overline p}^{(j)}( (S, I + P - 1), a_j) = (R, x),
        $$
        with, by Equation \eqref{eqn:def_seq_grid_aut_transition},
        $$
         R = \delta(S, a_j)
           \cup \bigcup_{\substack{(\overline q, b) \\ \overline p = \overline q + \psi(b)}} \delta(\pi_1(\delta_q^{(j)}(s_q^{(0,j)}, a_j^{I + P})), b).
        $$
        
        Let $\overline r \in H_j$ with $\overline p = \overline r + \psi(b)$
        for some $b \in \Sigma$, and $\overline p \in H_j$ the point from the statement of this Proposition.
        Then, as $P$ divides the period of $\mathcal A_{\overline r}^{(j)}$, and $x$ is greater than or equal to
        the index of $\mathcal A_{\overline r}^{(j)}$, we have
        $$
         \delta_{\overline r}^{(j)}( s_{\overline r}^{(0,j)}, a_j^{p_j} )
          = \delta_{\overline r}^{(j)}( s_{\overline r}^{(0,j)}, a_j^{x} ).
        $$
        Similar, 
        $\delta_{\overline r}^{(j)}( s_{\overline r}^{(0,j)}, a_j^{I + P} ) = \delta_{\overline r}^{(j)}( s_{\overline r}^{(0,j)}, a_j^{I + P} )$.
        So, combining everything so far, we have 
        $$
          \delta_{\overline p}^{(j)}( s_{\overline p}^{(0,j)}, a_j^{p_j} ) = (R, x)
        $$
        with
        $$
         R =  \delta(S, a_j)
           \cup \bigcup_{\substack{(\overline q, b) \\ \overline p = \overline q + \psi(b)}} \delta(\pi_1(\delta_q^{(j)}(s_q^{(0,j)}, a_j^{x})), b) = T.
        $$
        Hence $\sigma_{\mathcal A}(p) = T = R = \pi_1( \delta_{\overline p}^{(j)}( s_{\overline p}^{(0,j)}, a_j^{p_j} )$. $\qed$
     \end{enumerate}
      \end{proof}

\subsection{Proof of Theorem~\ref{thm:regularity_condition} (See page~\pageref{thm:regularity_condition})}       
\regularitycondition*
    \begin{proof}
     We use the same notation as introduced in the statement of the theorem.
     Let $p = (p_1, \ldots, p_k) \in \mathbb N_0^k$ and $j \in \{1,\ldots k\}$.
     Denote by $\sigma_{\mathcal A} : \mathbb N_0^k \to \mathcal P(Q)$ the state label function
     from Definition \ref{def:state-label-function}.
     Then, with Proposition \ref{prop:grid_aut_decomp},
     if $p_j \ge I_j$, we have 
     \begin{equation}\label{eqn:state_label_fn_ult_periodic}
     \sigma_{\mathcal A}(p_1, \ldots, p_{j-1}, p_j + P_j, p_{j+1}, \ldots, p_k)
      = \sigma_{\mathcal A}(p_1, \ldots, p_k).
     \end{equation}
     Construct the unary semi-automaton\footnote{The term semi-automaton is used
     for automata without a designated initial state, nor a set of final states.}
     $\mathcal A_j = (\{a_j\}, Q_j, \delta_j)$
     with 
     \begin{align*}
        Q_j & = \{s^{(j)}_0, s_1^{(j)}, \ldots, s_{I_j + P_j - 1}^{(j)} \}, \\
        \delta_j( s_i^{(j)} , a_j ) & = \left\{ \begin{array}{ll}
       s_{i+1}^{(j)} & \mbox{ if } i < I_j \\ 
       s_{I_j + (i-I_j+1) \bmod P_j}^{(j)} & \mbox{ if } i \ge I_j. \end{array}\right.
     \end{align*}
     Then build $\mathcal C = (\Sigma, Q_1 \times \ldots \times Q_k, \mu, s_0, E)$
     with
     \begin{align*}
       s_0                               & = (s_0^{(1)}, \ldots, s_0^{(k)}), \\
       \mu( (t_1, \ldots, t_k), a_i ) & = (t_1, \ldots, t_{j-1}, \delta_j(t_j, a_j), t_{j+1}, \ldots, t_k) \quad \mbox{ for all } 1 \le j \le k, \\
       E & = \{ \mu( s_0, u) : u \in L(\mathcal A) \} 
     \end{align*}
     By construction, for words $u,v \in \Sigma$ with $u \in \perm(v)$
     we have $\mu((t_1, \ldots, t_k), u) = \mu((t_1, \ldots, t_k), v)$
     for any state $(t_1, \ldots, t_k) \in Q_1 \times \ldots Q_k$.
     Hence, the language accepted by $\mathcal C$ is commutative.
     We will show that $L(\mathcal C) = \perm(L(\mathcal A))$.
     By choice of $E$ we have $L(\mathcal A) \subseteq L(\mathcal C)$,
     this gives $\perm(L(\mathcal A)) \subseteq L(\mathcal C)$.
     Conversely, suppose $w \in L(\mathcal C)$.
     Then $\mu(s_0, w) = \mu(s_0, u)$ for some $u \in L(\mathcal A)$.
     Next, we will argue that we can find $w' \in L(\mathcal C)$
     and $u' \in L(\mathcal A)$ with $\mu(s_0, w') = \mu(s_0, w) = \mu(s_0, u) = \mu(s_0, u')$ and $\max\{ |w'|_{a_j}, |u'|_{a_j} \} < I_j + P_j$
      for all $j \in \{1,\ldots, k\}$.
      
     \begin{enumerate}
     \item[(i)] By construction of $\mathcal C$, if $|w|_{a_j} \ge I_j + P_j$, we
      can find $w'$ with $|w'|_{a_j} = |w|_{a_j} - P_j$ such that
      $\mu(s_0, w') = \mu(s_0, w)$. So, inductively, suppose we have $w' \in \Sigma^*$
      with $|w'|_{a_j} < I_j + P_j$ for all $j \in \{1,\ldots, k\}$
      and $\mu(s_0, w) = \mu(s_0, w')$.
      
     \item[(ii)] By Corollary \ref{cor:state_label_described_com_closure}, $\sigma_{\mathcal A}(\psi(u))\cap F \ne \emptyset$.
      So, if $|u|_{a_j} \ge I_j + P_j$, by Equation \eqref{eqn:state_label_fn_ult_periodic},
      we can find $u'$ with $|u'|_{a_j} = |u|_{a_j} - P_j$
      and $\sigma_{\mathcal A}(\psi(u')) \cap F \ne \emptyset$.
      By definition of the state label function, as some word must induce the final state in $\sigma_{\mathcal A}(\psi(u'))$,
      we can choose $u' \in L(\mathcal A)$.
      Also, by construction of $\mathcal C$, we have $\mu(s_0, u) = \mu(s_0, u')$.
      So, after repeatedly applying the above steps, suppose we have $u' \in L(\mathcal A)$
      with $\mu(s_0,u) = \mu(s_0, u')$ and $|u'|_{a_j} < I_j + P_j$ for all $j \in \{1,\ldots, k\}$.
     \end{enumerate}
     By construction of $\mathcal C$, for words $u, v \in \Sigma^*$
     with $\max\{|u|_{a_j}, |v|_{a_j}\} < I_j + P_j$
     for all $j \in \{1,\ldots, k\}$ we have 
     \begin{equation}\label{eqn:C} 
      \mu(s_0, u) = \mu(s_0, v) \Leftrightarrow u \in \perm(v) \Leftrightarrow \psi(u) = \psi(v).
     \end{equation}
     Hence, using Equation \eqref{eqn:C} for the words $w'$ and $u'$ from (i) and (ii) above, as $\mu(s_0, u') = \mu(s_0, w')$, we
     find $\psi(u') = \psi(w')$.
     So $\sigma_{\mathcal A}(\psi(w')) \cap F \ne \emptyset$ as $u' \in L(\mathcal A)$.
     Now, again using Equation \eqref{eqn:state_label_fn_ult_periodic},
     this gives $\sigma_{\mathcal A}(\psi(w)) \cap F \ne \emptyset$,
     which, by Corollary \ref{cor:state_label_described_com_closure}, yields $w \in \perm(L(\mathcal A))$. $\qed$
    \end{proof}
    
\subsection{Proof of Lemma~\ref{lem:state_labels_grow_grp_case} (See page~\pageref{lem:state_labels_grow_grp_case})} 
\stablelabelsgrowgrpcase*
      \begin{proof}
       Notation as in the statement of the Lemma.
       By Equation \eqref{eqn:def_unary_aut_transition}
       and Equation \eqref{eqn:def_seq_grid_aut_transition}
       from Definition \ref{def:sequ_grid_decomp_aut}, intuitively,
       as we read in symbols
       in the automaton $\mathcal A_p^{(j)}$,
       the state set label of the next state is composed by the transition 
       from the previous state label, and by adding states from nearby automata. 
       And as we only apply permutations, those sets cannot get smaller.
       More formally, from Equation \eqref{eqn:def_seq_grid_aut_transition} of Definition \ref{def:sequ_grid_decomp_aut},
       we have that $\delta(S, a_j^i) \subseteq \pi_1(\delta_p^{(j)}((S,x), a_j^i))$
       for all $i \ge 0$.
       As $a_i$ induces a permutation on the state set, we have $|S| = |\delta(S, a_i)|$,
       which gives the first claim.
       If $(S,x)$ and $(T,y)$ are both on the cycle, we can map them 
       both onto each other by appropriate inputs,
       which implies $|S| = |T|$ by the aforementioned fact. $\qed$
      \end{proof}
      
\subsection{Proof of Lemma~\ref{lem:period_Ap_group} (See page~\pageref{lem:period_Ap_group})} 
\periodApgroup*
      \begin{proof} 
       Notation as in the statement of the Lemma.
       From Equation \eqref{eqn:def_seq_grid_aut_transition} of Definition \ref{def:sequ_grid_decomp_aut},
       we have $\delta(S, a_j^i) \subseteq \pi_1(\delta_p^{(j)}((S,x), a_j^i))$
       for all $i \ge 0$.
       So as $\delta(S, a_j^{L_S}) = \operatorname{id}_{|S}$, by Lemma \ref{lem:lcm_cycle_subset}, 
       this gives $S \subseteq T$. Hence, as $|S| = |T|$, we get $S = T$.
       Further as $m \ge I_i$, by Equation \eqref{eqn:def_unary_aut_transition} of 
       Definition \ref{def:sequ_grid_decomp_aut}, as $P$ divides $\lcm(P, L_S)$, we have $x = y$. 
       By Lemma \ref{lem:unary_period_divides} this implies
       that the period of $\mathcal A_p^{(j)}$ divides $\lcm(P, L_S)$. $\qed$
      \end{proof}
      
\subsection{Proof of Proposition~\ref{prop:bound_periods_Ap_group} (See page~\pageref{prop:bound_periods_Ap_group})} \boundperiodsApgroup*
        \begin{proof}
         Notation as in the statement of the Proposition.
         Set
         $$
          \mathcal P =\{ \mathcal A_q^{(j)} \mid p = q + \psi(b) \mbox{ for some } b \in \Sigma \}.
         $$
         Denote by $I$ the maximal index, and by $P$ the least common multiple of the periods,
         of the unary automata from $\mathcal P$.
         Let $(S, x) \in Q_p^{(j)}$ be any state from the cycle of $\mathcal A_p^{(j)}$.
         From Equation \eqref{eqn:def_unary_aut_transition} of Definition \ref{def:sequ_grid_decomp_aut},
         by inspecting the second ''counting'' component of the states, we see that the index 
         of $\mathcal A_p^{(j)}$ must be greater or equal than $I$. 
         Hence $x \ge I$. By Equation \eqref{eqn:def_unary_aut_transition}
         we have $(S, x) = \delta_p^{(j)}(s_p^{(0,j)}, a_j^x)$.
         Denote by $L_S$ the least common multiple of the cycle lengths of states from $S$
         with respect to the letter $a_j$, seen as a permutation of the states.
         Consider $(T, y) = \delta_p^{(j)}((S,x), a_j^{\lcm(P, L_S)})$.
         By Lemma \ref{lem:state_labels_grow_grp_case}, as they are both on the cycle, we
         have $|T| = |S|$. Then, using Lemma \ref{lem:period_Ap_group},
         we find $T = S$ and $x = y$, and that the period of $\mathcal A_p^{(j)}$
         divides $\lcm(P, L_S)$.
         Obviously, $L_S$ divides $L_j$.
         Inductively\footnote{See Remark \ref{rem:inductive_form} for an explantion
      of the induction scheme used.}, the periods of all automata in $\mathcal P$
         divide $L_j$, and so $P$ divides $L_j$.
         Hence, $\lcm(P, L_S)$ divides $L_j$. So the period of $\mathcal A_p^{(j)}$
         divides $L_j$. $\qed$
      \end{proof}
 
\subsection{Proof of Corollary~\ref{cor:Li_on_cycle} (See page~\pageref{cor:Li_on_cycle})} 
\Lioncycle*
        \begin{proof}
         We choose the same notation as in the statement of the Corollary.
         By Proposition \ref{prop:bound_periods_Ap_group}, the number $P$
         divides $L_j$.
         For any subset $R \subseteq Q$,
         if $L_R$ is the least common multiple of the cycle lengths of elements
         from $R$, it is also divided by $L_j$. Hence $\lcm(L_R, P)$
         divides $L_j$. 
         If $R, S \subseteq Q$ and $0 \le l \le L_j$,
         by Lemma \ref{lem:state_labels_grow_grp_case},
         if $(R, z) = \delta_p^{(j)}( (S,x), a_j^{l})$, then $|R| = |S|$,
         as $|S| \le |R| \le |T|$, and $|S| = |T|$ by assumption.
         Let $L_S$ be the least common multiple of the cycle lengths of elements in $S$.
         As $\lcm(L_S, P)$ divides $L_j$, we have, for
         \begin{equation}\label{eqn:lcm} 
          (R, z) = \delta_p^{(j)}( (S,x), a_j^{\lcm(L_S, P)})
         \end{equation}
         by the previous reasoning that $|R| = |S|$. 
         As, by assumption, $x \ge I$, 
         and as $(S, x) = \delta_p^{(j)}( s_p^{(0,j)}, a_j^x )$, we can use Lemma \ref{lem:period_Ap_group},
         which gives $R = S$ and $x = z$.
         But, as $\lcm(P, L_S)$ is a divisor of $L_j$, this
         gives $T = S$ and $x = y$ by repeatedly applying Equation \eqref{eqn:lcm} $\qed$
        \end{proof}
        
\subsection{Proof of Proposition~\ref{prop:index_specific_set} (See page~\pageref{prop:index_specific_set})}         
\indexspecificset*
       \begin{proof}
       
         Let $j \in \{1,\ldots, k\}$, $p \in H_j$ 
         and $\mathcal A_p^{(j)} = (\{a_j\}, Q_p^{(j)}, \delta_p^{(j)}, s_p^{(0,j)}, F_p^{(j)})$ the automaton 
         from Definition \ref{def:sequ_grid_decomp_aut}.
         Set
         $$
          \mathcal P =\{ \mathcal A_q^{(j)} \mid p = q + \psi(b) \mbox{ for some } b \in \Sigma \}.
         $$
         Denote by $I$ the maximal index and by $P$ the least common multiple of the periods
         of the unary automata in $\mathcal P$.
         If $\mathcal P = \emptyset$, by Definition~\ref{def:sequ_grid_decomp_aut} the 
         index of $\mathcal A_p^{(j)}$ is zero\footnote{More specifically, 
         in that case $p = (0,\ldots, 0)$ and the reachable part  from the start state
         of $\mathcal A_p^{(j)}$
         is essentially the reachable part of $\mathcal A$, by restricting to inputs from $\{a_j\}^*$.}.
         So the claim holds true as the state set labels are non-empty.
         So, suppose $\mathcal P \ne \emptyset$. We define a sequence $(T_n, y_n) \in Q_p^{(j)}$ of states for $n \in \mathbb N_0$.
         Set $(T_0, y_0) = \delta_p^{(j)}(s_p^{(0,j)}, a_j^I)$ (which implies $y_0 = I$) and
         $$
          (T_n, y_n) = \delta_p^{(j)}( (T_{j-1}, y_{j-1}), a_j^{L_j} ) 
         $$
         for $n > 0$.
         
         \begin{enumerate} 
         \item[(i)] \underline{Claim:} Let $(T, x)\in Q_p^{(j)}$ be some state from the cycle of $\mathcal A_p^{(j)}$.
           Then the state $\delta_p^{(j)}( s_p^{(0,j)}, a_j^{I + (|T| - |T_0|) L_j} )$
           is also from the cycle of $\mathcal A_p^{(j)}$.
           
           \medskip
           
            By construction, and Equation \eqref{eqn:def_unary_aut_transition} 
         from Definition \ref{def:sequ_grid_decomp_aut}, we have $y_n \ge I$ for all $n$.
         If $T_{n+1} \ne T_n$, then, by Corollary \ref{cor:Li_on_cycle}
         and Lemma \ref{lem:state_labels_grow_grp_case}, we have $|T_{n+1}| > |T_n|$.
         Hence, by finiteness, we must have a smallest $m$ such that $T_{m+1} = T_m$.
         By Corollary \ref{cor:Li_on_cycle}, this also implies $y_{m+1} = y_m$.
         Hence, we are on the cycle of $\mathcal A_p^{(j)}$, and the period
         of this automaton divides $L_j$ by Proposition \ref{prop:bound_periods_Ap_group}.
         This yields $(T_n, y_n) = (T_m, y_m)$ for all $n \ge m$.
         By Lemma \ref{lem:state_labels_grow_grp_case},
         the size of the state label sets on the cycle stays constant, and just
         grows before we enter the cycle.
         As we could
         add at most $|T_m| - |T_0|$ elements, and for $T_0, T_1, \ldots, T_{m}$
         each time at least one element is added, we have, as $m$
         was chosen minimal, that $m \le |T| - |T_0|$,
         where $T$ is any state label on the cycle, which all have the same cardinality $|T| = |T_k|$
         by Lemma \ref{lem:state_labels_grow_grp_case}.
         This means we could read at most $|T| - |T_0|$
         times the sequence $a_j^{L_j}$, starting from $(T_0, I)$,
         before we enter the cycle of $\mathcal A_p^{(i)}$.
         
         \item[(ii)] \underline{Claim:} We have $I \le (|T_0|-1) L_j$.
          
          Let $\mathcal A_q^{(j)} \in \mathcal P$ and suppose $p = q + \psi(b)$
          for $b \in \Sigma$. By Equation \eqref{eqn:def_seq_grid_aut_transition}
          from Definition \ref{def:sequ_grid_decomp_aut}, for 
          any $(S, x) = \delta_q^{(j)}( s_q^{(0,j)}, a_j^n )$
          we have $\delta(S, b) \subseteq \pi_1(\delta_p^{(j)}(s_p^{(0,j)}, a_j^n))$.
          In particular for $n = I$ we get $\delta(S, b) \subseteq T_0$, which gives $|S| \le |T_0|$,
          as $b$ induces a permutation on the states.
          Also for $n \ge I$ we are on the cycle of $\mathcal A_q^{(j)}$.
          Inductively\footnote{See Remark \ref{rem:inductive_form} for an explantion
      of the induction scheme used. The arguments in Claim (i) and (ii)
      would also work for $\mathcal P = \emptyset$ with a little adaption, but we prefered
      to state the case $\mathcal P = \emptyset$ here explicitly.}, then the index of $\mathcal A_q^{(j)}$ is at most $(|S|-1) L_j \le (|T_0|-1) L_j$.
          As $\mathcal A_q^{(j)} \in \mathcal P$ was chosen arbitrary,
          we get $I \le (|T_0|-1) L_j$.
         \end{enumerate}
         
         With Claim (ii) we can derive the upper bound $(|T|-1) L_j$ for the length of the word $a_j^{I + (|T| - |T_0|)L_j}$ 
         used in Claim (i), as 
         $$
          I + (|T| - |T_0|)L_j \le (|T_0|-1) L_j + (|T| - |T_0|) L_j = (|T|-1) L_j.
         $$
         And because (i) essentially says that the index of $\mathcal A_p^{(j)}$
         equals at most $I + (|T| + |T_0|) L_j$, this gives our bound for the index of $\mathcal A_p^{(j)}$. $\qed$
\end{proof}


\subsection{Some Additional Remarks and Alternative Proofs}

Here, I collect some remarks, or alternative proofs, that might be of additional interest. As far
as the paper is concerned, it is self-contained without the content of this section.

\begin{definition} The \emph{shuffle operation}, denoted by $\shuffle$, is defined as
 \begin{align*}
    u \shuffle v & := \left\{ \begin{array}{ll}
     \multirow{2}{*}{$x_1 y_1 x_2 y_2 \cdots x_n y_n  \mid$} &  u = x_1 x_2 \cdots x_n, v = y_1 y_2 \cdots y_n, \\ 
         &   x_i, y_i \in \Sigma^{\ast}, 1 \le i \le n, n \ge 1
  \end{array} \right\},
 \end{align*}
 for $u,v \in \Sigma^{\ast}$ and 
  $L_1 \shuffle L_2  := \bigcup_{x \in L_1, y \in L_2} (x \shuffle y)$ for $L_1, L_2 \subseteq \Sigma^{\ast}$.
\end{definition}

    %
    
    The languages accepted by the automata $\mathcal A_p^{(j)}$
    do not seem to play any role in the main part of the paper. But actually, some interesting relationships nevertheless
    hold true.
    
    \begin{lemma}
     Suppose $\Sigma = \{a_1, \ldots, a_k\}$. 
     Let $\mathcal A = (\Sigma, Q, \delta, s_0, F)$ be a finite automaton.
     Choose $j \in \{1,\ldots, k\}$ and $p \in H_j$, then
     $$
     L(\mathcal A_p^{(j)}) = \pi_{a_j}( \{ u \mid \exists w \in \Sigma\setminus\{a_j\}^* : u \in w \shuffle a_j^*, \psi(w) = p, u \in \perm(L(\mathcal A)) \})
     $$ 
     where $\pi_{a_j} : \Sigma^* \to \{a_j\}^*$ is given by $\pi_{a_j}(a_j) = a_j$ and $\pi_{a_j}(a_i) = \varepsilon$ for $i \ne j$.
    \end{lemma}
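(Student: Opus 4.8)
The plan is to characterise $L(\mathcal A_p^{(j)})$ directly in terms of the state labelling, using Proposition~\ref{prop:grid_aut_decomp}, and then to translate that description of labels into a description of words via Proposition~\ref{prop:conn_parikh} and the definition of $\perm$. First I would recall the acceptance condition built into Definition~\ref{def:sequ_grid_decomp_aut}: a state $(S,i)\in Q_p^{(j)}$ is final iff $S\cap F\neq\emptyset$, so $a_j^{m}\in L(\mathcal A_p^{(j)})$ iff $\pi_1(\delta_p^{(j)}(s_p^{(0,j)},a_j^{m}))\cap F\neq\emptyset$. By Proposition~\ref{prop:grid_aut_decomp} (applied with $\overline p = p$, since $p\in H_j$, and with the point $p + m\,\psi(a_j)$) this first component equals $\sigma_{\mathcal A}(p + m\,\psi(a_j))$. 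Hence
$$
 L(\mathcal A_p^{(j)}) = \{\, a_j^{m} \mid m\ge 0,\ \sigma_{\mathcal A}(p + m\,\psi(a_j)) \cap F \neq \emptyset \,\}.
$$
By Proposition~\ref{prop:conn_parikh} the condition $\sigma_{\mathcal A}(r)\cap F\neq\emptyset$ is exactly $r\in\psi(L(\mathcal A))$, and since $\psi(L(\mathcal A)) = \psi(\perm(L(\mathcal A)))$, this says $p + m\,\psi(a_j)\in\psi(\perm(L(\mathcal A)))$, i.e.\ there is $u\in\perm(L(\mathcal A))$ with $\psi(u) = p + m\,\psi(a_j)$.

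Next I would unwind the right-hand side of the claimed identity. An element of the set inside $\pi_{a_j}(\cdot)$ is a word $u$ such that, for some $w\in(\Sigma\setminus\{a_j\})^{*}$ with $\psi(w) = p$, we have $u\in w\shuffle a_j^{*}$ and $u\in\perm(L(\mathcal A))$. Since $w$ uses no $a_j$, the condition $u\in w\shuffle a_j^{m'}$ for some $m'$ forces $\psi(u) = \psi(w) + m'\,\psi(a_j) = p + m'\,\psi(a_j)$, and then $\pi_{a_j}(u) = a_j^{m'}$, where $m' = |u|_{a_j}$. Conversely, given any $u\in\perm(L(\mathcal A))$ with $\psi(u) = p + m\,\psi(a_j)$, one can take $w$ to be any word with $\psi(w) = p$ and no occurrence of $a_j$ (which exists because $p\in H_j$ means the $j$-th coordinate of $p$ is $0$), and then $u\in w\shuffle a_j^{m}$ holds simply because $u$ and $wa_j^{m}$ have the same Parikh image and $w\shuffle a_j^{m}$ is all interleavings — more precisely, $u\in\perm(wa_j^m)$ and $\perm(wa_j^m) = \bigcup\{\, x \shuffle a_j^m \mid x\in\perm(w)\,\}$, but we only need that some such $x$ works; taking $x$ a suitable rearrangement of $w$ suffices, and the resulting $\pi_{a_j}(u)=a_j^m$ is independent of that choice. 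So the image under $\pi_{a_j}$ of that set is precisely $\{\, a_j^{m} \mid \exists u\in\perm(L(\mathcal A)):\psi(u) = p + m\,\psi(a_j)\,\}$, which matches the displayed description of $L(\mathcal A_p^{(j)})$ obtained above.

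Assembling these two computations gives $L(\mathcal A_p^{(j)})$ equals the stated image, which proves the lemma. The only genuinely delicate point is the existence of a witness word $w$ and the interleaving argument — i.e.\ checking that $u\in w\shuffle a_j^{m}$ can always be arranged once the Parikh images line up and $w$ is free of $a_j$; this is where I would be careful, but it is essentially the observation that the shuffle of a word over $\Sigma\setminus\{a_j\}$ with $a_j^{*}$ realises every word whose $a_j$-deleted projection is a permutation of $w$, combined with commutativity of $\perm(L(\mathcal A))$ which lets us replace any such $u$ by one whose non-$a_j$ letters appear in the order given by $w$. The rest is a direct chain of the already-established propositions.
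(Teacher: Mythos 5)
Your proof is correct and follows essentially the same route as the paper: unwind the acceptance condition of $\mathcal A_p^{(j)}$ via Proposition~\ref{prop:grid_aut_decomp}, translate state-label membership into Parikh-image membership via Proposition~\ref{prop:conn_parikh} together with $\psi(L(\mathcal A))=\psi(\perm(L(\mathcal A)))$, and then reconcile this with the $\pi_{a_j}$-image of the set on the right. The only slightly rough spot is the passage where you say one may ``take $w$ to be any word with $\psi(w)=p$ and no occurrence of $a_j$'' before correcting to a ``suitable rearrangement'' $x$ — this would read more cleanly if you simply defined the witness $w$ as $u$ with all $a_j$'s deleted, which directly gives $w\in(\Sigma\setminus\{a_j\})^*$, $\psi(w)=p$, and $u\in w\shuffle a_j^{|u|_{a_j}}$.
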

    \begin{proof} 
    By Proposition \ref{prop:grid_aut_decomp},  Proposition \ref{prop:conn_parikh} and Definition \ref{def:sequ_grid_decomp_aut}
    we have
    \begin{align*}
        a_j^n \in L(\mathcal A_p^{(j)})
         & \Leftrightarrow \pi_1(\delta_p^{(j)}(s_p^{(0,j)}, a_j^n)) \cap F  \ne \emptyset, p \in H_j \\
         & \Leftrightarrow \sigma_{\mathcal A}(p + \psi(a_j^n)) \cap F \ne \emptyset, p \in H_j \\
         & \Leftrightarrow p + \psi(a_j^n) \in \psi(L(\mathcal A)), p \in H_j \\ 
         & \Leftrightarrow a_j^n \in \pi_{a_j}(\{ u \mid \psi(u) - \psi(a_j^{|u|_{a_j}}) = p , u \in L(\mathcal A) \}) 
    \end{align*}
    And $\{ u \mid \psi(u) - \psi(a_j^{|u|_{a_j}}) = p , u \in L(\mathcal A) \}
    = \{ u \mid \exists w \in \Sigma^* : u \in w \shuffle a_j^*, \psi(w) = p \in H_j, u \in L(\mathcal A) \}
    = \{ u \mid \exists w \in \Sigma\setminus\{a_j\}^* : u \in w \shuffle a_j^*, \psi(w) = p, u \in \perm(L(\mathcal A)) \}$. $\qed$
    \end{proof}

    \begin{proposition} 
     Suppose $\Sigma = \{a_1, \ldots, a_k\}$. 
     Let $\mathcal A = (\Sigma, Q, \delta, s_0, F)$ be a finite automaton.
     Choose $j \in \{1,\ldots, k\}$,
     then\footnote{Note that $L \shuffle \emptyset = \emptyset$ for any language $L \subseteq \Sigma*$.}
     $$
      \perm(L(\mathcal A))
       = \bigcup_{p \in H_j} \bigcup_{\psi(w) = p} w \shuffle L(\mathcal A_p^{(j)}).
     $$
    \end{proposition}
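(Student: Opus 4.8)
The plan is to prove both inclusions of the claimed identity
$$
\perm(L(\mathcal A)) = \bigcup_{p \in H_j} \bigcup_{\psi(w) = p} w \shuffle L(\mathcal A_p^{(j)})
$$
by translating everything into statements about Parikh images and the state label function, using Proposition~\ref{prop:grid_aut_decomp} and Corollary~\ref{cor:state_label_described_com_closure} as the bridge. The key observation is that for a word $u \in \Sigma^*$, if we let $n = |u|_{a_j}$ and let $w$ be (any word with Parikh image equal to) the projection of $u$ onto $\Sigma \setminus \{a_j\}$, then $\psi(u) = \psi(w) + \psi(a_j^n)$, and $\psi(w) \in H_j$ since $|w|_{a_j} = 0$. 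So membership of $u$ in the right-hand side is governed entirely by whether $a_j^n \in L(\mathcal A_p^{(j)})$ for $p = \psi(w) \in H_j$.

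First I would establish the inclusion $\supseteq$. Take $p \in H_j$, a word $w$ with $\psi(w) = p$, and a word $v \in w \shuffle L(\mathcal A_p^{(j)})$. Since the shuffle of $w$ with a unary language over $\{a_j\}$ interleaves copies of $a_j$ into $w$, there is some $n \ge 0$ with $a_j^n \in L(\mathcal A_p^{(j)})$ and $\psi(v) = \psi(w) + \psi(a_j^n) = p + (0,\ldots,0,n,0,\ldots,0)$. By the preceding lemma (or directly: by Definition~\ref{def:sequ_grid_decomp_aut}, $F_p^{(j)} = \{(S,i) \mid S \cap F \ne \emptyset\}$, together with Proposition~\ref{prop:grid_aut_decomp}), $a_j^n \in L(\mathcal A_p^{(j)})$ is equivalent to $\sigma_{\mathcal A}(p + \psi(a_j^n)) \cap F \ne \emptyset$, i.e. $\sigma_{\mathcal A}(\psi(v)) \cap F \ne \emptyset$. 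By Corollary~\ref{cor:state_label_described_com_closure} this means $v \in \perm(L(\mathcal A))$.

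For the inclusion $\subseteq$, take any $v \in \perm(L(\mathcal A))$. Set $n = |v|_{a_j}$ and let $w$ be the word obtained from $v$ by deleting all occurrences of $a_j$; then $w \in (\Sigma \setminus \{a_j\})^*$, so $p := \psi(w) \in H_j$, and $v \in w \shuffle a_j^n$ trivially (interleave back the deleted letters). It remains to check $a_j^n \in L(\mathcal A_p^{(j)})$. Now $\psi(v) = p + \psi(a_j^n)$, and since $v \in \perm(L(\mathcal A))$, Corollary~\ref{cor:state_label_described_com_closure} gives $\sigma_{\mathcal A}(\psi(v)) \cap F \ne \emptyset$, hence $\sigma_{\mathcal A}(p + \psi(a_j^n)) \cap F \ne \emptyset$. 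By Proposition~\ref{prop:grid_aut_decomp}, $\sigma_{\mathcal A}(p + \psi(a_j^n)) = \pi_1(\delta_p^{(j)}(s_p^{(0,j)}, a_j^n))$, so $\pi_1(\delta_p^{(j)}(s_p^{(0,j)}, a_j^n)) \cap F \ne \emptyset$, which by the definition of $F_p^{(j)}$ means exactly $a_j^n \in L(\mathcal A_p^{(j)})$. Therefore $v \in w \shuffle L(\mathcal A_p^{(j)}) \subseteq \bigcup_{p \in H_j}\bigcup_{\psi(w)=p} w \shuffle L(\mathcal A_p^{(j)})$, completing the proof.

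I do not expect a serious obstacle here; the only subtlety is bookkeeping around the shuffle operation, namely checking that $v$ decomposes as $w$ shuffled with $a_j^{|v|_{a_j}}$ where $w$ is the $a_j$-free projection, and conversely that any element of $w \shuffle a_j^*$ has Parikh image $\psi(w) + \psi(a_j^n)$ for the appropriate $n$. Both are immediate from the definition of $\shuffle$ but worth stating cleanly. One could also shorten the whole argument by simply invoking the preceding lemma, which already characterises $L(\mathcal A_p^{(j)})$ in terms of $\perm(L(\mathcal A))$; I would mention that as the conceptually cleanest route and keep the direct Parikh-image argument as the self-contained version.
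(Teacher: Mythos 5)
Your proof is correct and follows essentially the same route as the paper: both decompose an arbitrary word into its $a_j$-free part plus a power of $a_j$, then pass through Corollary~\ref{cor:state_label_described_com_closure}, Proposition~\ref{prop:grid_aut_decomp}, and the definition of $F_p^{(j)}$ to translate membership in $\perm(L(\mathcal A))$ into a unary membership statement $a_j^n \in L(\mathcal A_p^{(j)})$. The only presentational difference is that the paper runs the argument as a single chain of equivalences whereas you split it into two explicit inclusions; the substance is identical.
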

    \begin{proof}
      Let $w \in \Sigma^*$ with $\psi(w) = (p_1, \ldots, p_k)$, then $w \in u \shuffle a_j^{p_j}$
      for some unique $u \in (\Sigma\setminus\{a_j\})^*$.
      Set $\overline p = (p_1, \ldots, p_{j-1}, 0, p_{j+1}, \ldots, p_k) \in H_j$.
      By Corollary \ref{cor:state_label_described_com_closure},
      Proposition \ref{prop:grid_aut_decomp}
      and Definition \ref{def:sequ_grid_decomp_aut} we have the equivalences:
      \begin{align*}
         w \in \perm(L(\mathcal A))
          & \Leftrightarrow \sigma_{\mathcal A}(\psi(w)) \cap F \ne \emptyset \\ 
          & \Leftrightarrow \pi_1(\delta_{\overline p}^{(j)}(s_{\overline p}^{(0,j)}, a_j^{p_j})) \cap F \ne \emptyset \mbox{ and } \psi(w) = \overline p + p_j \\
          & \Leftrightarrow a_j^{p_j} \in L(\mathcal A_{\overline p}^{(j)}) \mbox{ and } \psi(w) = \overline p + p_j \mbox{ and } \overline p \in H_j \\
          & \Leftrightarrow w \in u \shuffle L(\mathcal A_{\overline p}^{(j)}) \mbox{ and } \psi(u) = \overline p \in H_j.
          \quad \qed
      \end{align*}
    \end{proof}
    
     Also we note a connection with the class of jumping finite automata (JFAs).
Without giving a formal definition, let us mention here that a JFA looks syntactically indistinguishable from a classical (nondeterministic) finite automaton, only the processing of an input word is different: In one step, not necessarily the next symbol of the input is digested, i.e., the first symbol of the (remaining) input word, but just any symbol from the input. Pictorially speaking, the read head of the automaton might jump anywhere in the input before digesting the input symbol that it scans then; after digestion, this input symbol is cut out from the input.
      A JFA could be seen as a descriptional device
      for the commutative closure of some regular language. As
      noted in \cite{FernauHoffmann19}, a regular language is accepted by some jumping finite
      automaton if and only it is commutative.
      For a commutative regular language, we can operate a given deterministic automaton
      for the language as a jumping automaton, and hence the
      the size of a minimal jumping automaton is always smaller than
      the size of a minimal ordinary automaton.
      For the reverse direction, in the case of group languages, we can
      derive the next relative descriptional complexity result.
     
      \begin{restatable}[]{corollary}{relativedesccompjfa}\label{cor:relative_desc_compl_jfa}
       Let $L$ be a regular language accepted by some jumping finite automaton
       with $n$ states,
       which, if seen as an ordinary automaton, is a permutation automaton\footnote{Note that 
       by definition, a permutation automaton is deterministic.}. 
       Then $L$ could be accepted by some finite deterministic automaton
       of size at most $O((n e^{\sqrt{n \ln n}})^{|\Sigma|})$.
      \end{restatable}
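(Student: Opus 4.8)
The plan is to reduce the statement directly to Theorem~\ref{thm:state_compl_grp_lang}. Let $\mathcal M$ be the given jumping finite automaton with $n$ states, and let $\mathcal A$ denote the same transition structure read as an ordinary, non-jumping finite automaton. By hypothesis $\mathcal A$ is a permutation automaton, hence (by definition) deterministic and complete, so $L(\mathcal A)$ is a regular group language of state complexity at most $n$.

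First I would recall the semantics of jumping finite automata: a computation step consumes an \emph{arbitrary} occurrence of a symbol from the remaining input rather than the leftmost one. Hence $\mathcal M$ accepts a word $w$ exactly when $\mathcal A$ accepts some rearrangement of $w$, i.e. the language accepted by $\mathcal M$ under the jumping semantics equals $\perm(L(\mathcal A))$. This is the observation recorded in \cite{FernauHoffmann19} that underlies the characterisation of the languages accepted by jumping automata as the commutative regular languages; here it is used only in the direction ``jumping automaton $\Rightarrow$ commutative closure of the underlying ordinary language''. Note that, since a permutation automaton is deterministic and has no $\varepsilon$-transitions, no nondeterminism issues complicate this identification. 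So $L = \perm(L(\mathcal A))$.

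Finally I would apply Theorem~\ref{thm:state_compl_grp_lang} to the group language $L(\mathcal A)$, whose state complexity $m$ satisfies $m \le n$; using that the bound $O\big((m\, e^{\sqrt{m \ln m}})^{|\Sigma|}\big)$ is monotone increasing in $m$, we conclude that $\perm(L(\mathcal A)) = L$ is regular and is accepted by a deterministic complete automaton with $O\big((n\, e^{\sqrt{n \ln n}})^{|\Sigma|}\big)$ states, as claimed. There is no genuine obstacle here; the only points needing care are the two just mentioned, namely that ``permutation automaton'' already entails determinism, and the monotonicity remark that lets the exact state complexity of $L(\mathcal A)$ be replaced by the crude bound $n$.
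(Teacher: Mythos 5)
Your proposal is correct and mirrors the paper's own (implicit) argument: the paper never gives a separate proof of this corollary, relying exactly on the observations you spell out — that the JFA read as an ordinary automaton $\mathcal A$ is a deterministic permutation automaton, that the JFA's language is $\perm(L(\mathcal A))$, and that Theorem~\ref{thm:state_compl_grp_lang} then yields the $O\big((n\,e^{\sqrt{n\ln n}})^{|\Sigma|}\big)$ bound. Your explicit remarks about determinism being built into ``permutation automaton'' and about monotonicity of the bound are sensible clarifications, not deviations.
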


\end{document}